\def\bSig\mathbf{\Sigma}
\newcommand{\rnc}{\renewcommand}
\newcommand{\nc}{\newcommand}
\newcommand{\mrm}{\mathrm}
\renewcommand{\b}{\textbf}
\nc{\mb}{\mathbb}
\nc{\mc}{\mathcal}
\nc{\E}{\mb{E}}
\nc{\N}{\mb{N}}
\nc{\R}{\mb{R}}
\nc{\Q}{\mb{Q}}
\rnc{\P}{\mrm P}
\rnc{\d}{\mrm d}
\nc{\C}{\mc{C}}
\nc{\D}{\mc{D}}
\nc{\B}{\mc{B}}
\nc{\oPo}{\stackrel{\mrm P}{\rightarrow}}
\nc{\oWo}{\stackrel{w}{\rightarrow}}
\nc{\oDo}{\stackrel{d}{\longrightarrow}}
\rnc{\b}{\textbf}
\nc{\bs}{\boldsymbol}
\numberwithin{equation}{section}
\theoremstyle{plain}
\newtheorem{theorem}{Theorem}[section]
\newtheorem{corollary}{Corollary}[section]
\theoremstyle{remark}
\newtheorem{remark}{Remark}[section]
\newtheorem{example}{Example}[section]
\newtheorem{lemma}{Lemma}[section]
\begin{document}

\title{The Wild Bootstrap for Multivariate Nelson-Aalen Estimators}

\author{\hspace{-1.2cm}\footnote{Authors with equal contribution, e-mail: tobias.bluhmki@uni-ulm.de, dennis.dobler@uni-ulm.de, jan.beyersmann@uni-ulm.de, markus.pauly@uni-ulm.de} \ Tobias Bluhmki, 
${}^*$ Dennis Dobler,
Jan Beyersmann,
Markus Pauly \\
Ulm University, Institute of Statistics, \\ Helmholtzstrasse 20, 89081 Ulm, \\ Germany}

\maketitle



\begin{abstract}
We rigorously extend the widely used wild bootstrap resampling technique to the multivariate Nelson-Aalen estimator under Aalen's multiplicative intensity model. Aalen's model covers general Markovian multistate models including competing risks subject to independent left-truncation and right-censoring. 
{\color{black}
This leads to various statistical applications such as asymptotically valid confidence bands or tests for equivalence and proportional hazards.
This is exemplified in a data analysis examining the impact of ventilation on the duration of intensive care unit stay. 
The finite sample properties of the new procedures are investigated in a simulation study.}
\end{abstract}

Keywords: conditional central limit theorem, counting process, equivalence test, proportional hazards, Kolmogorov-Smirnov test, survival analysis, weak convergence.

\section{Introduction}

One of the most crucial quantities within the analysis of time-to-event data with independently right-censored and left-truncated survival times is the cumulative hazard function, also known as transition intensity. Most commonly, it is nonparametrically estimated by the well-known \emph{Nelson-Aalen estimator} \citep[Chapter IV]{abgk93}. In this context, time-simultaneous confidence bands are the perhaps best interpretative tool to account for related estimation uncertainties. 

The construction of confidence bands is typically based on the asymptotic behavior of the underlying stochastic processes, more precisely, the (properly standardized) Nelson-Aalen estimator asymptotically behaves like a Wiener process. Early approaches utilized this property to derive confidence bands for the cumulative hazard function; see e.g., \citet{bie87} or Section~IV.1.3 in \citet{abgk93}.

However, \citet{dudek08} found that this approach applied to small samples can result in considerable deviations from the aimed nominal level. To improve small sample properties, \citet{efron79,efron81} suggested a computationally convenient and flexible resampling technique, called \textit{bootstrap}, where the unknown non-Gaussian quantile is approximated via repeated generation of point estimates based on random samples of the original data. For a detailed discussion within the standard right-censored survival setup, see also \citet{akritas86}, \citet{lo86}, and \citet{horvath87}. The simulation study of \citet{dudek08} particularly reports improvements of bootstrap-based confidence bands for the hazard function as compared to those using asymptotic quantiles. An alternative is the so-called \textit{wild bootstrap} firstly proposed in the context of regression analyses \citep{wu86}. As done in \citet{lin93}, the basic idea is to replace the (standardized) residuals with independent standardized variates -- so-called multipliers -- while keeping the data fixed. One advantage compared to Efron's bootstrap is to gain robustness against variance heteroscedasticity \citep{wu86}. Using standard normal {\color{black}multipliers}, this resampling procedure has been applied to construct time-simultaneous confidence bands for survival curves under the Cox proportional hazards model \citep{lin1994} and adapted to cumulative incidence functions in the more general competing risks setting \citep{lin97}. The latter approach has recently been extended to general wild bootstrap multipliers with mean zero and variance one \citep{beyersmann12b}, which indicate possible improved small sample performances. This result was confirmed in \citet{dobler14} as well as \citet{dobler15a}, where more general resampling schemes are discussed.

In contrast to probability estimation, the present article focuses on the nonparametric estimation of  cumulative hazard functions and proposes a general and flexible wild bootstrap resampling technique, which is valid for a large class of time-to-event models. In particular, the procedure is not limited to the standard survival or competing risks framework. The key assumption is that the involved counting processes satisfy the so-called \textit{multiplicative intensity model} \citep{abgk93}. Consequently, arbitrary Markovian multistate models with finite state space are covered, as well as various other intensity models \citep[e.g., excess or relative mortality models, cf.][]{andersen1989} and specific semi-Markov situations \citep[][Example X.1.7]{abgk93}. Independent right-censoring and left-truncation can straightforwardly be incorporated.

The main aim of this article is to mathematically justify the wild bootstrap technique for the multivariate Nelson-Aalen estimator in this general framework. This is accomplished by generalizing core arguments in \citet{beyersmann12b} and \citet{dobler14} and verifying conditional tightness via a modified version of Theorem~15.6 in \citet{billingsley68}; see p. 356 in \citet{jacod03}. Compared to the standard survival or competing risks setting, with at most one transition per individual, the major difficulty is to account for counting processes having an arbitrarily large random number of jumps. As \citet{beyersmann12b} suggested in the competing risks setting, we also permit for more general multipliers with expectation 0 and variance 1 and extend the resulting weak convergence theorems to resample the multivariate Nelson-Aalen estimator in our general setting. For practical applications, this result allows, for instance, within- or two-sample comparisons and the formulation of statistical tests. 

The wild bootstrap is exemplified to statistically assess the impact of mechanical ventilation in the intensive care unit (ICU) on the length of stay. A related problem is to investigate ventilation-free days, which was established as an efficacy measure in patients subject to acute respiratory failure \citep{schoenfeld2002}. However, statistical evaluation of their often-used methodology (see e.g., \citealp{sauaia2009} and \citealp{stewart2009}) relies on the constant hazards assumption. Other publications like \citet{dewit2008}, \citet{trof2012}, or \citet{curley2015} used a Kaplan-Meier-type procedure that does not account for the more complex multistate structure. In contrast, we propose an illness-death model with recovery that methodologically works under the more general time-inhomogeneous Markov assumption and captures both the time-dependent structure of mechanical ventilation and the competing endpoint `death in ICU'.

The remainder of this article is organized as follows: Section~\ref{sec:model} introduces cumulative hazard functions
and their Nelson-Aalen-type estimators using counting process formulations.
After summarizing its asymptotic properties, 
Section~\ref{sec:main} offers our main theorem on conditional weak convergence for the wild bootstrap. 
This allows for various statistical applications in Section~\ref{sec:stat_app}:
Two-sided hypothesis tests and various sorts of time-simultaneous confidence bands are deduced,
as well as simultaneous confidence intervals for a finite set of time points. 
{\color{black}
Furthermore, tests for equivalence, inferiority and superiority as well as for proportionality of two hazard functions constitute useful criteria in practical data analyses.
}
A simulation study assessing small and large sample performances of both the derived confidence bands in comparison to the algebraic approach based on the time-transformed Brownian motion {\color{black}and the  tests for proportional hazards} is reported in Section~\ref{sec:simus}. The SIR-3 data on patients in ICU (\citealp{beyersmann2006} and \citealp{wolkewitz2008}) serves as its template and is practically revisited in Section~\ref{sec:dataex}. 
Concluding remarks and a discussion are given in Section~\ref{sec:discussion}. 
All proofs are deferred to the Appendix. 

\section{Non-Parametric Estimation under the Multiplicative Intensity Structure}
\label{sec:model}

Throughout, we adopt the notation of \cite{abgk93}.
For ${k\in \mathbb{N}}$, let $\textbf{N}=\left(N_1,\ldots,N_k\right)'$ be a multivariate counting process
which is adapted to a filtration $(\mathcal F_t)_{t \geq 0}$. 
Each entry $N_j, j=1,\dots,k,$ is supposed to be a c\`adl\`ag function, zero at time zero, and to have piecewise constant paths with jumps of size one. 
In addition, assume that no two components jump at the same time and that each $N_{j}(t)$ satisfies the \textit{multiplicative intensity model} of \cite{aalen1978} 
with intensity process given by $\lambda_j(t) = \alpha_j(t) Y_j(t)$.
Here, $Y_j(t)$ defines a predictable process not depending on unknown parameters
and $\alpha_j$ describes a non-negative (hazard) function.
For well-definiteness, the observation of $\textbf{N}$ is restricted to the interval $[0,\tau]$, where 
$\tau< \tau_j = {\sup}\big\{ u \geq 0 :\int_{(0,u]}\alpha_j(s)ds<\infty\big\} \ \text{for all } j = 1, \dots, k.$ 
The multiplicative intensity structure covers several customary frameworks in the context of time-to-event analysis.
The following overview specifies frequently used models.
\begin{example}\label{ex:cov_models}
(a) Markovian multistate models with finite state space $\mathcal S$ are very popular in biostatistics. 
In this setting, $Y_\ell(t)$ represents the total number of individuals in state $\ell$ just prior to $t$ (`number at risk'), 
whereas $\alpha_{\ell m}(t)$ is the instantaneous risk (`transition intensity') to switch from state $\ell$ to $m$, 
where $\ell, m \in \mathcal S$, $\ell\ne m$.
Here $N_\ell = \sum_{i=1}^n N_{\ell;i}$ is the aggregation over individual-specific counting processes with $n \in \mathbb N$ individuals under study.
For specific examples (such as competing risks or the illness-death model) and details including the incorporation of independent left-truncation and right-censoring, see \cite{abgk93} and \cite{aalen08}. \\
(b) Other examples are the relative or excess mortality model, where not all individuals necessarily share the same hazard rate $\alpha$. 
In this case $Y$ cannot be interpreted as the total number of individuals at risk as in part (a); 
see Example IV.1.11 in \cite{abgk93} for details. \\
(c) The time-inhomogeneous Markov assumption required in part (a) can even be relaxed in specific situations: Following Example X.1.7 in \cite{abgk93}, consider an illness-death model without recovery. {\color{black} Assuming that the transition intensity $\alpha_{12}$ depends on the duration $d$ in the intermediate state, but not on time $t$, leads to semi-Markov process not satisfying the multiplicative intensity structure. This is because the intensity process of $N_{12}(t)$ is given by $\alpha_{12}(t-T)Y_1(t)$, where the first factor of the product is not deterministic anymore. Here, $T$ is the random transition time into state 1. However, when $d=t-T$ is used as the basic timescale, the counting process $K(d)=N_{12}(d+T)$ has intensity $\alpha_{12}(d)Y_1(d)$ with respect to the filtration 
\begin{align*}
\mathcal F_d=\left(\sigma\lbrace (N_{01}(t), N_{02}(t)): 0<t<\tau\rbrace \lor \sigma\lbrace K(d): 0<d<\infty\rbrace \right).
\end{align*}
Thus, the multiplicative intensity structure is fulfilled. }
\end{example}
Under the above assumptions, the Doob-Meyer decomposition applied to $N_j$
leads to
\begin{eqnarray}
 dN_{j}(s)= \lambda_{j}(s)ds+dM_{j}(s),\label{eq:doobmeyer}
\end{eqnarray}
where the $M_{j}$ are zero-mean martingales with respect to $(\mathcal F_t)_{t \in [0,\tau]}$. The canonical nonparametric estimator of the cumulative hazard function
$A_j(t)=\int_{(0,t]}\alpha_j(s)ds$ is given by the so-called \textit{Nelson-Aalen estimator} 
\begin{eqnarray*}
\hat A_{jn}(t)=\int\limits_{(0,t]}\frac{J_j(s)}{Y_j(s)}dN_j(s). \label{eq:NA}
\end{eqnarray*}
Here, $J_j(t)=\mathbf 1\{Y_j(t)>0\}$,
$\frac00 := 0$, and $n \in \mathbb N$ is a sample size-related number (that goes to infinity in asymptotic considerations). Its multivariate counterpart is introduced by $\hat{\textbf{{A}}}_n:=(\hat A_{1n},\ldots,\hat A_{kn})'$.
As in \cite{abgk93}, suppose that there exist deterministic functions $y_j$ 
with $\inf_{u \in [0,\tau]} y_j(u)>0$ such that 
\begin{eqnarray} \underset{s\in[0,\tau]}{\sup}\left|\frac{Y_j(s)}{n}-y_j(s)\right|\xrightarrow[]{\text{ }P\text{ }}0 \quad \text{for all } j=1,\dots, k , \label{Pre.ass11}
\end{eqnarray}
where `$\stackrel{P}{\rightarrow}$' denotes convergence in probability for $n\rightarrow\infty$.
For each $j$, define the normalized Nelson-Aalen process $W_{jn}:=\sqrt{n} ( \hat A_{jn}-A_j )$ 
possessing the asymptotic martingale representation 
\begin{eqnarray}
W_{jn}(t)& \doteqdot &\sqrt{n}\int\limits_{(0,t]}\frac{J_j(s)}{Y_j(s)}dM_j(s) \label{NA_uni_martingale}
\end{eqnarray} 
with $M_{j}$ given by (\ref{eq:doobmeyer}). 
Here, `$\doteqdot$' means that the difference of both sides converges to zero in probability.
Define the vectorial aggregation of all $W_{jn}$ as
$\textbf{\textit{W}}_n = (W_{1n}, \dots, W_{kn})'$ and let `$\stackrel{d}{\rightarrow}$' denote convergence in distribution for $n\rightarrow\infty$.
Then, Theorem IV.1.2 in \cite{abgk93} in combination with (\ref{Pre.ass11}) provides a weak convergence result 
on the $k$-dimensional space $\mathfrak D[0,\tau]^k$ of c\`{a}dl\`{a}g functions endowed with the product Skorohod topology. 
\begin{theorem}\label{Th.Na.uni}
If assumption (\ref{Pre.ass11}) holds, we have convergence in distribution 
\begin{eqnarray}
\textbf{W}_{n} \stackrel{d}{\longrightarrow} \textbf{U} = (U_1, \dots, U_k)',
\end{eqnarray}
on $\mathfrak D[0,\tau]^k$, where $U_1, \dots, U_k$ are independent zero-mean Gaussian martingales with covariance functions 
$\psi_j(s_1,s_2):=Cov(U_j(s_1),U_j(s_2))={\int}_{(0,s_1]}\frac{\alpha_j(s)}{y_j(s)}ds$ for $j = 1, \dots, k$ and $0\le s_1\le s_2\le \tau$.
\end{theorem}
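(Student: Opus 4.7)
The plan is to invoke Rebolledo's martingale central limit theorem (Theorem II.5.1 of \citet{abgk93}) on the asymptotic martingale part appearing in (\ref{NA_uni_martingale}), and then to conclude joint convergence using the fact that no two components of $\mathbf{N}$ jump simultaneously. First, I would observe that, for each $j$, $M_{jn}^*(t) := \sqrt{n}\int_{(0,t]}\frac{J_j(s)}{Y_j(s)}\,dM_j(s)$ is a locally square-integrable martingale with respect to $(\mathcal F_t)$, because the integrand is locally bounded and predictable. The relation $W_{jn}\doteqdot M_{jn}^*$ is justified separately: the difference equals $\sqrt n\int_{(0,t]}(1-J_j(s))\,\alpha_j(s)\,ds$, which vanishes in probability uniformly on $[0,\tau]$ by assumption (\ref{Pre.ass11}) together with $\inf_{s\in[0,\tau]}y_j(s)>0$. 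Hence it suffices to establish the weak convergence for the vector $(M_{1n}^*,\ldots,M_{kn}^*)'$.

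Next, I would compute the predictable (co)variation processes. Using (\ref{eq:doobmeyer}) and the multiplicative intensity structure,
\begin{equation*}
\langle M_{jn}^*\rangle(t)=\int_{(0,t]}\frac{nJ_j(s)}{Y_j(s)}\,\alpha_j(s)\,ds.
\end{equation*}
By (\ref{Pre.ass11}) and $\inf y_j>0$, a standard Slutsky argument shows the right-hand side tends in probability, uniformly in $t\in[0,\tau]$, to $\psi_j(t,t)=\int_{(0,t]}\alpha_j(s)/y_j(s)\,ds$. For $i\ne j$ the orthogonality $\langle M_i,M_j\rangle\equiv 0$ follows because $N_i$ and $N_j$ have no common jumps, which transfers to $\langle M_{in}^*,M_{jn}^*\rangle\equiv 0$. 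This yields the correct limiting covariance structure, and in particular the independence of the coordinates of $\mathbf U$.

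Third, I would check Rebolledo's Lindeberg-type condition: for every $\varepsilon>0$,
\begin{equation*}
\int_{(0,\tau]}\frac{nJ_j(s)}{Y_j(s)}\,\mathbf 1\Big\{\frac{\sqrt n J_j(s)}{Y_j(s)}>\varepsilon\Big\}\,\alpha_j(s)\,ds\oPo 0.
\end{equation*}
Because $Y_j(s)/n\oPo y_j(s)>0$ uniformly, on an event of probability tending to one the quantity $\sqrt n/Y_j(s)$ is bounded by $2/(\sqrt n\,\inf y_j)\to 0$, so the indicator eventually vanishes and the integral is dominated by $\int_{(0,\tau]}\alpha_j(s)/y_j(s)\,ds<\infty$ times a vanishing quantity; dominated convergence finishes this step. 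With the predictable variations converging to deterministic continuous limits and the Lindeberg condition satisfied, Rebolledo's theorem yields joint weak convergence of $(M_{1n}^*,\ldots,M_{kn}^*)'$ on $\mathfrak D[0,\tau]^k$ to a continuous Gaussian martingale vector $\mathbf U$ with independent coordinates and covariance functions $\psi_j$; combining with the asymptotic equivalence $W_{jn}\doteqdot M_{jn}^*$ gives the theorem.

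The step most likely to require care is the uniform-in-$t$ convergence of the predictable variations, since it hinges on inverting $Y_j/n$ uniformly on $[0,\tau]$, which is exactly what the assumption $\inf y_j>0$ is designed to permit; the Lindeberg condition is then automatic in this counting-process framework. No separate tightness argument is needed because Rebolledo's theorem supplies it directly from the convergence of the predictable variation and the Lindeberg condition.
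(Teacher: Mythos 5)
Your proposal is correct and follows essentially the same route the paper relies on: the paper deduces this theorem directly from Theorem~IV.1.2 of \citet{abgk93} in combination with assumption~(\ref{Pre.ass11}), and that cited result is itself established via Rebolledo's martingale central limit theorem exactly as you have reconstructed it (asymptotic martingale representation, uniform convergence of the predictable variations via~(\ref{Pre.ass11}) and $\inf y_j>0$, orthogonality from the no-common-jumps assumption, and the Lindeberg condition).
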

The covariance function $\psi_j$ is commonly approximated by the \textit{Aalen-type}
 \begin{eqnarray} \hat\sigma_j^2(s_1) = n \int\limits_{(0,s_1]} \frac{J_j(s)}{Y_j^2(s)} d N_j(s). \label{eq:varaalen}\end{eqnarray}
 or the \textit{Greenwood-type} estimator
 \begin{eqnarray} \hat\sigma_j^2(s_1) = n \int\limits_{(0,s_1]} \frac{J_j(s) (Y_j(s)-\Delta N_j(s))}{Y_j^3(s)} d N_j(s) \label{eq:vargreenwood}\end{eqnarray}
 which are consistent for $\psi_j(s_1,s_2)$ under the assumption of Theorem~\ref{Th.Na.uni}; cf. (4.1.6) and (4.1.7) in \cite{abgk93}. Here, $\Delta N_j(s)$ denotes the jump size of $N_j$ at time $s$.

\section{Inference via Brownian Bridges and the Wild Bootstrap}
\label{sec:main}

As discussed in \citet{abgk93}, the limit process $\textbf{\textit{U}}$ can analytically be approximated via Brownian bridges. However, improved coverage probabilities in the simulation study in Section \ref{sec:simus} suggest that the proposed wild bootstrap approach may be preferable. 
First, we sum up the classic result.

\subsection{Inference via Transformed Brownian Bridges}
\label{sec:brownian_bridge}
 The asymptotic mutual independence stated in Theorem~\ref{Th.Na.uni} allows to focus on a single component of $\textbf{\textit{W}}_n$, 
 say $W_{1n} = \sqrt{n} ( \hat A_{1n} - A_1 )$. For notational convenience, we suppress the subscript $1$. 
 Let $g$ be a positive (weight) function on an interval $[t_1,t_2]\subset [0,\tau]$ of interest and $B^0$ a standard Brownian bridge process. Then, as $n \rightarrow \infty$, it is established in Section~IV.1 in \cite{abgk93} that
 \begin{eqnarray}
  \sup_{s \in [t_1,t_2]} \Big| \frac{\sqrt n ( \hat A_n(s) - A(s))}{1 + \hat\sigma^2(s)} g \Big(\frac{\hat\sigma^2(s)}{1 + \hat\sigma^2(s)} \Big) \Big|
    \stackrel{d}{\longrightarrow} \sup_{s \in [\phi(t_1),\phi(t_2)]} | g(s) B^0(s) |. \label{eq:asymptdist}
 \end{eqnarray}
 Here $\phi(t) = \frac{\sigma^2(t)}{1 + \sigma^2(t)}$, $\sigma^2(t) = \psi(t,t)$ 
 and $\hat\sigma^2(t)$ is a consistent estimator for $\sigma^2(t)$, such as~\eqref{eq:varaalen} or~\eqref{eq:vargreenwood}.
 Quantiles of the right-hand side of \eqref{eq:asymptdist} for $g\equiv 1$ are recorded in tables \citep[e.g.,][]{koziol1975,hall1980,schumacher1984}. For general $g$, they can be approximated via standard statistical software.

 Even though relation \eqref{eq:asymptdist} enables statistical inference based on the asymptotics of a central limit theorem, appropriate resampling procedures usually showed improved properties; see e.g., \cite{hall91}, \cite{good05} and \cite{pauly15}.
 
\subsection{Wild Bootstrap Resampling}\label{sec:wbresampling}
In contrast to, for instance, a competing risks model where each counting process $N_{j}$ is at most $n$, 
the number $N_{j}(\tau)$  is not necessarily bounded  in our setup only assuming Aalen's multiplicative intensity model.
Hence, a modification of the multiplier resampling scheme under competing risks suggested by \cite{lin97} and elaborated by \cite{beyersmann12b} is required.  
For this purpose, introduce counting process-specific stochastic processes indexed by $s \in [0,\tau]$ that are independent of $N_j, Y_j$ for all $j=1,\dots,k$.
Let $(G_j(s))_{s \in [0,\tau]}, 1 \leq j \leq k,$ be independently and identically distributed (i.i.d.) white noise processes such that each $G_j(s)$ satisfies $\E(G_j(s)) = 0$ and $var(G_j(s))=1$, $j=1,\dots,k$, $s \in [0,\tau]$.
{\color{black}That is, all $\ell$-dimensional marginals of $G_1$, $\ell \in \N$, shall be the same $\ell$-fold product-measure.}
Then, a \emph{wild bootstrap version} of the normalized multivariate Nelson-Aalen estimator $\textbf{\textit{W}}_n$ is defined as
\begin{eqnarray}
\hat{\textbf{\textit{W}}}_{n}(t) & = & (\hat{W}_{1n}(t), \dots, \hat{W}_{kn}(t) )' \label{NA.uni.Wnh} \\
  & := & \sqrt{n} \bigg( \underset{(0,t]}{\int}\frac{J_1(s)}{Y_1(s)} G_{1}(s) dN_{1}(s),
  \dots, \underset{(0,t]}{\int}\frac{J_k(s)}{Y_k(s)} G_{k}(s) dN_{k}(s) \bigg)'. \nonumber
\end{eqnarray}
In words, $\hat{\textbf{\textit{W}}}_{n}$ is obtained from representation \eqref{NA_uni_martingale} of $\textbf{\textit{W}}_n$
by substituting the unknown individual martingale processes $M_{j}$ 
with the \textit{observable} quantities $G_{j} N_{j}$. 
Even though only the values of each $G_j$ at the jump times of $N_j$ are relevant,
this construction in terms of white noise processes enables a consideration of the wild bootstrap process on a product probability space;
see the Appendix for details.

{\color{black}
Consider for a moment the special case of a multistate model with $n$ i.i.d. individuals (Example~\ref{ex:cov_models}(a)).
For instance, the competing risks model in \cite{lin97} involves at most one transition (and thus one multiplier) per individual,
whereas \cite{glidden02} allows for arbitrarily many transitions but also introduces only one multiplier per individual.
In contrast, our resampling approach is a completely new approach in the sense that it involves independent weightings of all jumps even within the same individual.
Being able to resample the Nelson-Aalen estimator even for randomly many numbers of events per individual in this way is a real novelty
and this problem has not yet been theoretically discussed before -- using any technique whatsoever.
Hence, utilizing white noise processes as done in~\eqref{NA.uni.Wnh} is a new aspect in this area.
}

The limit distribution of $\hat{\textbf{\textit{W}}}_n$ may be approximated by simulating a large number of replicates of the $G$'s, 
while the data is kept fixed. 
For a competing risks setting with standard normally distributed multipliers, our general scheme reduces to the one discussed in \cite{lin97}.

For the remainder of the paper, we summarize the available data 
in the $\sigma$-algebra
${\color{black}\mathcal{C}_0} = \sigma \{N_{j}(u),$ $Y_{j}(u):j = 1,\dots,k,\ u\in[0,\tau]\}.$
{\color{black}A natural way to introduce a filtration based on $\mathcal C_0$ that progressively collects information on the white process is by setting
\begin{align*}
 \mathcal{C}_t = \mathcal C_0 \  \vee \  \sigma\{ G_j(s): j=1,\dots,k, \ s \in [0,t] \}.
\end{align*}
The following lemma is a key argument in an innovative, martingale-based consistency proof of the proposed wild bootstrap technique.
\begin{lemma}
\label{lem:mart}
 For each $n \in \N$, the wild bootstrap version of the multivariate Nelson-Aalen estimator $(\hat{\textbf{\textit{W}}}_{n}(t))_{t \in [0,\tau]}$
 is a square-integrable martingale with respect to the filtration $(\mathcal{F}_t)_{t \in [0,\tau]}$.
 with orthogonal components.
 Its predictable variation process is given by 
 $$ \langle \hat{\textbf{\textit{W}}}_{n} \rangle : \ t \ \longmapsto \ n  \bigg( \int_0^t \frac{J_1(s)}{Y_1^2(s)} d N_1(s), \dots, \int_0^t \frac{J_k(s)}{Y_k^2(s)} d N_k(s) \bigg)$$
 and its optional variation process by 
 $$ [ \hat{\textbf{\textit{W}}}_{n} ] : \ t \ \longmapsto \ n  \bigg( \int_0^t \frac{J_1(s)}{Y_1^2(s)} G_1^2(s) d N_1(s), \dots, \int_0^t \frac{J_k(s)}{Y_k^2(s)} G_k^2(s) d N_k(s) \bigg) .$$
\end{lemma}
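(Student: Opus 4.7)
The plan is to verify each claim directly in the enlarged filtration $(\mathcal{C}_t)$ defined just above the lemma (which I read the stated $(\mathcal{F}_t)$ as referring to), exploiting that all of $N_j, Y_j, J_j$ are $\mathcal{C}_0$-measurable while the only time-varying randomness comes from the white-noise processes $G_j$. First I would establish $\mathcal{C}_t$-measurability and square-integrability: since $N_j(\tau)<\infty$ a.s., each $\hat{W}_{jn}(t)=\sqrt{n}\sum_{T_{j,k}\le t}G_j(T_{j,k})J_j(T_{j,k})/Y_j(T_{j,k})$ is a finite sum of products of $\mathcal{C}_0$-measurable coefficients and $\mathcal{C}_t$-measurable multipliers, and the independence of $G_j$ across distinct times together with $\mathrm{var}(G_j(s))=1$ gives $E[\hat{W}_{jn}^2(t)]=n\,E[\int_0^t J_j/Y_j^2\,dN_j]\le n\,E[N_j(\tau)]<\infty$.

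For the martingale property, I would note that for $0\le s<t$ the increment $\hat{W}_{jn}(t)-\hat{W}_{jn}(s)$ is a finite sum $\sqrt n\sum_{T_{j,k}\in(s,t]}G_j(T_{j,k})J_j(T_{j,k})/Y_j(T_{j,k})$. Conditioning first on $\mathcal{C}_0$ makes the coefficients and jump times deterministic, and the independence of $G_j$ from $\mathcal{C}_0$ combined with the white-noise property gives $E[G_j(T_{j,k})\mid\mathcal{C}_s]=0$ for every $T_{j,k}>s$, so the conditional increment vanishes. Orthogonality of components then follows from two ingredients: $N_j$ and $N_l$ share no jump times by assumption, so $[\hat{W}_{jn},\hat{W}_{ln}]\equiv 0$; and by independence of $G_j$ and $G_l$ the same conditioning argument shows that $\hat{W}_{jn}\hat{W}_{ln}$ is a martingale, whence $\langle\hat{W}_{jn},\hat{W}_{ln}\rangle=0$.

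For the variation processes, the optional one is immediate since $\hat{W}_{jn}$ is a pure-jump process with jumps of size $\sqrt n\,G_j(T_{j,k})/Y_j(T_{j,k})$ at the jump times of $N_j$, giving $[\hat{W}_{jn}](t)=n\int_0^t J_j/Y_j^2\,G_j^2\,dN_j$. For the predictable variation I would expand $E[(\hat{W}_{jn}(t)-\hat{W}_{jn}(s))^2\mid\mathcal{C}_s]$ into diagonal and off-diagonal contributions; the off-diagonal terms vanish by white-noise independence while each diagonal one contributes $E[G_j^2(T_{j,k})\mid\mathcal{C}_s]=1$, yielding $n\int_s^t J_j/Y_j^2\,dN_j$. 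The key subtlety, which I expect to be the main obstacle, is to justify that this formula is the \emph{predictable} variation: although $N_j$ is not predictable in the data filtration $(\mathcal F_t)$, in $(\mathcal{C}_t)$ its entire sample path is $\mathcal{C}_0\subset\mathcal{C}_{t-}$-measurable, so integrals against $dN_j$ define $(\mathcal{C}_t)$-predictable processes, and uniqueness of the Doob-Meyer compensator then identifies the integral as $\langle\hat{W}_{jn}\rangle$.
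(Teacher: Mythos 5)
Your proof is correct and follows essentially the same route as the paper: condition on $\mathcal{C}_s$, exploit the $\mathcal{C}_0$-measurability of $N_j$, $Y_j$, $J_j$ together with the white-noise independence of the multipliers to annihilate cross terms, read off the optional variation from the pure-jump structure, and identify the $\mathcal{C}_0$-measurable (hence $(\mathcal{C}_t)$-predictable) integral as the compensator of $\hat W_{jn}^2$. The only small caveat is that your bound $E[\hat W_{jn}^2(t)] \le n\, E[N_j(\tau)]$ implicitly uses $Y_j \ge 1$ on $\{J_j = 1\}$, which holds in multistate models with a number-at-risk interpretation but not in the full multiplicative-intensity generality of Example~2.1(b); the paper's own justification of square-integrability is, however, no more explicit.
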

}

The following conditional weak convergence result justifies the approximation of the limit distribution of $\textbf{\textit{W}}_{n}$ via $\hat{\textbf{\textit{W}}}_n$ given {\color{black}$\mathcal{C}_0$}. 
Both, the general framework requiring only Aalen's multiplicative intensity structure
as well as using possibly non-normal multipliers are original to the present paper.
\begin{theorem}\label{Th.Na.uni-What}
Let $\textbf U$ be as in Theorem \ref{Th.Na.uni}. 
Assuming \eqref{Pre.ass11}, we have the following conditional convergence in distribution on $\mathfrak D[0,\tau]^k$ given {\color{black}$\mathcal{C}_0$ as $n \rightarrow \infty$}:
\begin{eqnarray*}
\hat{\textbf W}_{n}\stackrel{d}{\longrightarrow} \textbf U \quad \text{in probability.} 
\end{eqnarray*}
\end{theorem}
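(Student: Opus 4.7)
The plan is to establish conditional weak convergence of $\hat{\textbf{\textit{W}}}_n$ to $\textbf{\textit{U}}$ in probability on $\mathfrak D[0,\tau]^k$ in two stages: (i) conditional convergence of finite-dimensional distributions in probability, and (ii) conditional tightness in probability. Because Lemma~\ref{lem:mart} identifies $\hat{\textbf{\textit{W}}}_n$ as a square-integrable martingale with orthogonal components with respect to $(\mathcal{C}_t)_{t \in [0,\tau]}$, the natural tool is a Rebolledo-type martingale central limit theorem applied componentwise, combined with a tightness criterion tailored to possibly unbounded jump counts.

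For the finite-dimensional distributions, I would first note that the predictable variation process given by Lemma~\ref{lem:mart} is exactly the Aalen-type variance estimator $\hat\sigma_j^2$ from \eqref{eq:varaalen}; this is $\mathcal{C}_0$-measurable and, under assumption \eqref{Pre.ass11}, converges in probability to $\psi_j(\cdot,\cdot)$. Orthogonality of the components (no two $N_j$ jump simultaneously) kills all cross-covariations, pinning down the block-diagonal covariance structure of the limit. A Lindeberg-type condition is verified using that a typical jump of $\hat W_{jn}$ has magnitude of order $n^{-1/2}|G_j(s)|$ by \eqref{Pre.ass11}, and $\E[G_j^2] = 1 < \infty$ makes truncated second moments vanish. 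Since all ingredients are either $\mathcal{C}_0$-measurable or converge in probability, a subsequence argument combined with a conditional Rebolledo theorem upgrades unconditional convergence to conditional convergence in probability of the finite-dimensional distributions. The independence of the limiting $U_j$'s is a direct consequence of orthogonality.

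Conditional tightness is the principal technical obstacle. The difficulty, as emphasized in the introduction, is that under the general multiplicative intensity model $N_j(\tau)$ need not be bounded (a single individual may undergo recurrent transitions), so the discrete multiplier-counting arguments used in the competing-risks literature do not apply. Following the paper's announced strategy, I would invoke the modified form of Theorem~15.6 of \citet{billingsley68} recorded on p.~356 of \citet{jacod03}, based on the conditional second-moment estimate
\begin{align*}
\E\big[(\hat W_{jn}(t) - \hat W_{jn}(s))^2 \,\big|\, \mathcal{C}_0\big] = n \int_{(s,t]} \frac{J_j(u)}{Y_j^2(u)} \, dN_j(u) \xrightarrow{\ P\ } \psi_j(t,t) - \psi_j(s,s),
\end{align*}
together with the continuity of $\psi_j$ in both arguments. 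This yields the required modulus-of-continuity control for $\hat W_{jn}$ conditionally on $\mathcal{C}_0$ outside a set of vanishing probability.

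Combining conditional finite-dimensional convergence with conditional tightness gives the desired conditional weak convergence in probability. The hard part will be the rigorous verification of the tightness criterion: one must carefully decouple the $\mathcal{C}_0$-measurable and the $G_j$-dependent components, and then push the usual martingale increment bounds through despite the random, potentially large, number of jumps of each $N_j$ — this is precisely where the modification of Billingsley's Theorem~15.6 is indispensable.
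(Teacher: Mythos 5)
Your plan conflates two of the paper's arguments and, as written, leaves a gap in the tightness step. The paper's main proof (in the Appendix) is a one-shot application of Rebolledo's martingale CLT conditionally on $\mathcal{C}_0$: Lemma~\ref{lem:mart} shows $\hat{\textbf{\textit{W}}}_n$ is a $(\mathcal{C}_t)$-martingale with orthogonal components whose predictable variation is exactly the Aalen-type estimator \eqref{eq:varaalen}; this is uniformly consistent for $\psi_j$ under \eqref{Pre.ass11}, the Lindeberg condition reduces to that of the Nelson--Aalen estimator itself, and Rebolledo then yields weak convergence on $\mathfrak D[0,\tau]^k$. Crucially, Rebolledo's theorem (e.g.\ Theorem~II.5.1 in \citet{abgk93}) delivers tightness together with the finite-dimensional convergence; there is no need for a separate modulus-of-continuity argument. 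Your stage (i) essentially reconstructs this Rebolledo argument, so appending a separate tightness stage is redundant.

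The more serious issue is in your stage (ii). If you do wish to prove tightness separately (as the paper does in its alternative, supplementary proof, where the finite-dimensional part is handled by an elementary conditional CLT rather than Rebolledo), the estimate you quote is not what the Billingsley/Jacod--Shiryaev criterion requires. Theorem~15.6 in \citet{billingsley68} (and the version on p.~356 of \citet{jacod03}) demands a bound on the \emph{product} of increment moments,
\begin{align*}
\E\big[\,(\hat W_{jn}(t)-\hat W_{jn}(s))^2 \,(\hat W_{jn}(s)-\hat W_{jn}(r))^2 \,\big|\, \mathcal{C}_0\big] \;\leq\; \big[\hat\sigma_j^2(t)-\hat\sigma_j^2(r)\big]^2,
\end{align*}
which rests on the fourth-moment structure of the white noise process across disjoint time intervals, not merely on the second-moment increment $\E[(\hat W_{jn}(t)-\hat W_{jn}(s))^2\mid\mathcal{C}_0]$ that you state. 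A single conditional second-moment bound does not control the modulus of continuity in the required two-sided fashion, because it cannot rule out a large jump occurring precisely at $s$. You would need to carry out the four-fold stochastic integral computation over disjoint intervals, exploit $\E[G]=0$ and $\E[G^2]=1$ for products of multipliers at distinct jump times, and only then invoke the Jacod--Shiryaev proposition along subsequences. Either drop stage (ii) and rely fully on Rebolledo, or replace the second-moment estimate with the product bound.
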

\begin{remark}\label{rm:wc}
 {\color{black}
 (a) 
 It is due to the martingale property of the wild bootstrapped multivariate Nelson-Aalen estimator
 that we anticipate a good finite sample approximation of the unknown distribution of the Nelson-Aalen estimator.
 In particular, the wild bootstrap, realized by white noise processes as above, succeeds in imitating the martingale structure of the original Nelson-Aalen estimator.
 The predictable variation process of the wild bootstrap process equals the optional variation process of the centered Nelson-Aalen process.
 Hence, both processes share the same properties and approximately the same covariance structure.
 }
 \\ 
 (b)
 Additionally to the proof presented in the Appendix, a more elementary consistency proof is shown in the Supplementary Material.
 The proof transfers the core arguments of \cite{beyersmann12b} and \cite{dobler14} to the multivariate Nelson-Aalen estimator in a more general setting: 
First, we show convergence of all finite-dimensional conditional marginal distributions of $\hat {\textbf{\textit{W}}}_n$ towards $\textbf{\textit{U}}$ generalizing some findings of \cite{Pauly11a}. 
Second, we verify conditional tightness by applying a variant of Theorem~15.6 in \cite{billingsley68}; see \cite{jacod03}, p. 356.
In both cases the subsequence principle for random elements converging in probability is combined with assumption (\ref{Pre.ass11}).
\\
{\color{black}
(c)
  Suppose that $E(n^k J_1(u) / Y_1^k(u)) = O(1)$ for some $k \in \N$ and all $u \in [0,\tau]$,
  which for example holds for any $k \in \N$ if $Y_1$ has a number at risk interpretation.
  Since different increments of $\textbf{\textit{W}}_n$ (to arbitrary powers) are uncorrelated,
  it can be shown that the convergence in Theorem~\ref{Th.Na.uni} for single $t \in [0,\tau]$ even holds in the Mallows metric $d_p$ for any even $0 < p \leq k$;
  see e.g. \cite{bickel81} for such theorems related to the classical bootstrap.
  Provided that the $r$th moment of $G_1(u)$ exists, similar arguments show 
  that the convergence in probability in Theorem~\ref{Th.Na.uni-What} for single $t \in [0,\tau]$ holds in the Mallows metric $d_p$ for any even $0 < p \leq r$ as well.
  This of course includes white noise processes with $Poi(1)$ or standard normal margins, as applied later on.
}
\end{remark}

%
%

%
\section{Statistical Applications}
\label{sec:stat_app}
%

{\color{black}
Throughout this section denote by $\alpha \in (0,1)$ the nominal level of all inference procedures.
}

%
\subsection{Confidence Bands}
%

\label{sec:constrCB}

After having established all required weak convergence results,
we discuss different possibilities for realizing confidence bands for $A_j$ around the Nelson-Aalen estimator $\hat A_{jn}$, $j=1,\ldots,k,$
on an interval $[t_1, t_2] \subset [0,\tau]$ of interest. Later on, we propose a confidence band for differences of cumulative hazard functions.
As in Section~\ref{sec:brownian_bridge}, we first focus on $A_1$ and suppress the index 1 for notational convenience. Following \cite{abgk93}, Section~IV.1, we consider weight functions 
\begin{eqnarray*}
 g_1(s) = (s(1-s))^{-1/2} \qquad \text{or} \qquad g_2 \equiv 1
\end{eqnarray*}
as choices for $g$ in relation (\ref{eq:asymptdist}). 
The resulting confidence bands are commonly known as \emph{equal precision} and \emph{Hall-Wellner} bands, respectively.
We apply a log-transformation in order to improve small sample level $\alpha$ control.
Combining the previous sections' convergences with the functional delta-method and Slutsky's lemma yields
\begin{theorem} \label{thm:delta_meth_conv}
{\color{black}
Under condition \eqref{Pre.ass11},
}
for any $0 \leq t_1 \leq t_2 \leq \tau$ such that $A(t_1) > 0$, 
we have the following convergences in distribution on the c\`adl\`ag space $\mathfrak D[t_1,t_2]$:
 \begin{eqnarray}
\label{eq:weak_conv_logA.1}
 \Big( \sqrt n \hat A_n \frac{ \log \hat A_n - \log A}{1 + \hat\sigma^2} \Big) \cdot g \circ \frac{\hat\sigma^2}{1 + \hat\sigma^2}
 & \stackrel{d}{\longrightarrow} &(g B^0) \circ \phi \quad \text{and}\\
 \label{eq:weak_conv_logA.2}
  \Big( \frac{\hat W_n}{1 + \sigma^{*2}} \Big) \cdot g  \circ \frac{\sigma^{*2}}{1 + \sigma^{*2}} & \stackrel{d}{\longrightarrow}&
  (g B^0) \circ \phi
 \end{eqnarray}
conditionally given {\color{black}$\mathcal{C}_0$} in probability, 
with $\phi$ as in Section~\ref{sec:main} and the wild bootstrap variance estimator $\sigma^{*2}(s):=n\int_{(0,t]}J(s)Y^{-2}(s)G^2(s)$ $dN(s)$.
\end{theorem}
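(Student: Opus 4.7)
The plan is to reduce both displays to the weak convergence statements already established in Theorems~\ref{Th.Na.uni} and~\ref{Th.Na.uni-What} by combining the functional delta method with consistency of the variance estimators, and then to identify the limit as a time-changed Brownian bridge.

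For \eqref{eq:weak_conv_logA.1} I first note that, since $A$ is non-decreasing and $A(t_1) > 0$, the function $A$ is bounded away from zero on $[t_1,t_2]$. Hence $\Phi : f \mapsto \log f$ is Hadamard differentiable at $A$ tangentially to $\mathfrak D[t_1,t_2]$, with derivative $\Phi'_A(h) = h/A$. Applying the functional delta method to $W_n = \sqrt n (\hat A_n - A) \stackrel{d}{\longrightarrow} U$ on $[t_1,t_2]$ gives $\sqrt n(\log \hat A_n - \log A) \stackrel{d}{\longrightarrow} U/A$. Uniform consistency $\hat A_n \stackrel{P}{\rightarrow} A$ on $[t_1,t_2]$ (an immediate byproduct of Theorem~\ref{Th.Na.uni}) and of the Aalen- or Greenwood-type estimator $\hat\sigma^2$ for $\sigma^2$ then allow Slutsky's lemma together with the continuous mapping theorem, applied to the map
\[
(f,v) \ \longmapsto \ \frac{f}{1+v} \cdot g\!\left(\frac{v}{1+v}\right),
\]
to yield convergence in distribution of the left-hand side of \eqref{eq:weak_conv_logA.1} to $U/(1+\sigma^2) \cdot g(\sigma^2/(1+\sigma^2))$. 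This limit is identified with $(gB^0) \circ \phi$ via the classical fact that the Gaussian martingale $U$ has the same law as $B \circ \sigma^2$ for a standard Brownian motion $B$, together with the distributional identity $B(x)/(1+x) \stackrel{d}{=} B^0(x/(1+x))$; indeed, the coupling $B^0(u) := (1-u)B(u/(1-u))$ gives exact pathwise equality after the substitution $u = \phi(t)$.

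For \eqref{eq:weak_conv_logA.2} the argument proceeds in parallel, now using Theorem~\ref{Th.Na.uni-What} which delivers $\hat W_n \stackrel{d}{\longrightarrow} U$ conditionally given $\mathcal{C}_0$ in probability in place of Theorem~\ref{Th.Na.uni}. The only new ingredient is conditional uniform consistency of the wild bootstrap variance estimator, $\sigma^{*2} \stackrel{P}{\rightarrow} \sigma^2$ on $[t_1,t_2]$ given $\mathcal{C}_0$. This follows from the identification $\sigma^{*2} = [\hat W_n]$ noted in Lemma~\ref{lem:mart} combined with the fact, already used in the proof of Theorem~\ref{Th.Na.uni-What}, that the optional and predictable variations of the bootstrap process differ by a conditionally negligible term; directly, using $E[G_j^2(s)]=1$ and mutual independence of the $G_j$'s one checks that $\sigma^{*2}(s) - \hat\sigma^2(s)$ has conditional mean zero and vanishing conditional variance (an $L^2$ calculation facilitated by the moment condition sketched in Remark~\ref{rm:wc}(c)), so that $\sigma^{*2} \stackrel{P}{\rightarrow} \hat\sigma^2 \stackrel{P}{\rightarrow} \sigma^2$. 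A conditional version of Slutsky's lemma together with the continuous mapping theorem applied exactly as in the previous paragraph then delivers \eqref{eq:weak_conv_logA.2}.

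The anticipated main obstacle is making the conditional Slutsky and continuous mapping steps rigorous in this non-standard setting, where the counting processes may exhibit an arbitrarily large random number of jumps; this is the same technical issue that the variant of Theorem~15.6 of \cite{billingsley68} cited in Remark~\ref{rm:wc}(b) is tailored to handle, and on which the proof of Theorem~\ref{Th.Na.uni-What} already relies. Continuity of the weight functions $g_1$ and $g_2$ on $[\phi(t_1),\phi(t_2)] \subset (0,1)$ — which holds because $A(t_1) > 0$ forces $\phi(t_1) > 0$ while $\sigma^2(t_2) < \infty$ forces $\phi(t_2) < 1$ — ensures that the continuous-mapping arguments go through for both weight choices in Section~\ref{sec:constrCB}.
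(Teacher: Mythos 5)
Your proof follows essentially the same route as the paper's: combine the weak convergences of Theorems~\ref{Th.Na.uni} and~\ref{Th.Na.uni-What} with consistency of the variance estimators, Slutsky's lemma, and the functional delta method for $x \mapsto \log x$. The extra detail you supply --- checking that $A(t_1)>0$ makes $\log$ Hadamard differentiable on the relevant cone, and identifying the limit $U/(1+\sigma^2)\cdot g(\sigma^2/(1+\sigma^2))$ with $(gB^0)\circ\phi$ via the coupling $B^0(u)=(1-u)B(u/(1-u))$ --- is correct and fills in what the paper outsources to Section~IV.1 of \cite{abgk93}. One correction, however: your sketched $L^2$ argument for the conditional uniform consistency of $\sigma^{*2}$ quietly requires $E[G^4(s)]<\infty$, since the conditional variance of $\sigma^{*2}(s)-\hat\sigma^2(s)=n\int_0^s J Y^{-2}(G^2-1)\,dN$ is proportional to $\mathrm{Var}(G^2)$; no such fourth-moment assumption appears in the hypotheses of Theorem~\ref{thm:delta_meth_conv} (it shows up only in the optional Remark~\ref{rm:wc}(c)). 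The paper avoids this by obtaining consistency of $\sigma^{*2}$, which by Lemma~\ref{lem:mart} equals the optional variation process $[\hat W_n]$, directly as a by-product of Rebolledo's martingale central limit theorem in the proof of Theorem~\ref{Th.Na.uni-What}, and alternatively cites \cite{dobler14}; invoking either of those closes the gap without any additional moment conditions on $G$.
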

In particular, $\sigma^{*2}$ is a uniformly consistent estimate for $\sigma^2$ \citep{dobler14}
{\color{black}and, being the optional variation process of the wild bootstrap Nelson-Aalen process, it is a natural choice of a variance estimate}. 
For practical purposes, we adapt the approach of \citet{beyersmann12b} and estimate $\sigma^2$ based on the empirical variance of the wild bootstrap quantities $\hat{{W}}_{n}$. 
The continuity of the supremum functional translates \eqref{eq:weak_conv_logA.1} and~\eqref{eq:weak_conv_logA.2} 
into weak convergences for the corresponding suprema.
Hence, the consistency of the following critical values is ensured:
\begin{eqnarray*}
 c_{1-\alpha}^g & = & (1-\alpha) \text{ quantile of} \quad 
    \mathfrak L \Big( \sup_{s \in [t_1,t_2]} | g(\hat \phi(s)) B^0(\hat \phi(s)) | \Big), \\
 \tilde c_{1-\alpha}^g & = & (1-\alpha) \text{ quantile of} \quad 
    \mathfrak L \Big( \sup_{s \in [t_1,t_2]} \Big| \frac{\hat W_n(s)}{1 + \sigma^{*2}(s)} g \Big(\frac{\sigma^{*2}(s)}{1 + \sigma^{*2}(s)} \Big) \Big| \  \Big| \ {\color{black}\mathcal{C}_0}  \Big),
\end{eqnarray*}
where $\mathfrak L(\cdot)$ denotes the law of a random variable and $\alpha\in(0,1)$ the nominal level. 
Here, $g$ equals either $g_1$ or $g_2$ 
and $\hat\phi = \frac{\hat \sigma^2}{1+\hat \sigma^2}$. 
Note, that $\tilde c_{1-\alpha}^g$ is, in fact, a random variable. The results are back-transformed into four confidence bands for $A$ 
abbreviated with $HW$ and $EP$ for the Hall-Wellner and equal precision bands
and $a$ and $w$ for bands based on quantiles of the asymptotic distribution and the wild bootstrap, respectively.
In our simulation studies these bands are also compared with the linear confidence band $CB_{dir}^w$, which is based on the critical value
\begin{eqnarray*}
  \quad \tilde c_{1-\alpha} & = & (1-\alpha) \text{ quantile of} \quad 
    \mathfrak L \Big( \sup_{s \in [t_1,t_2]} \big| \hat W_n(s) \big| \Big| \ {\color{black}\mathcal{C}_0}  \Big).
\end{eqnarray*}

\begin{corollary}\label{cor:CBs}
Under the assumptions of Theorem \ref{thm:delta_meth_conv},
the following bands for the cumulative hazard function $(A(s))_{s \in [t_1,t_2]}$ provide an asymptotic coverage probability of $1-\alpha$:
 \begin{eqnarray}
 CB_{EP}^a & = & \Big[\hat A_n(s) \exp \Big( \mp \frac{c_{1-\alpha}^{g_1}}{\sqrt n \hat A_n(s)} \hat\sigma_n(s) \Big)\Big]_{s \in [t_1,t_2]} \nonumber \\
 CB_{HW}^a & = &\Big[\hat A_n(s) \exp \Big( \mp \frac{c_{1-\alpha}^{g_2}}{\sqrt n \hat A_n(s)} (1+\hat\sigma_n^2(s)) \Big)\Big]_{s \in [t_1,t_2]} \nonumber\\
 CB_{EP}^w & = &\Big[\hat A_n(s) \exp \Big( \mp \frac{\tilde c_{1-\alpha}^{g_1}}{\sqrt n \hat A_n(s)} \hat\sigma_n(s) \Big)\Big]_{s \in [t_1,t_2]} \label{eq:CBs} \\
 CB_{HW}^w & = & \Big[\hat A_n(s) \exp \Big( \mp \frac{\tilde c_{1-\alpha}^{g_2}}{\sqrt n\hat A_n(s)} (1+\hat\sigma_n^2(s)) \Big)\Big]_{s \in [t_1,t_2]} \nonumber \\
 CB_{dir}^w &= &\Big[\hat A_n(s) \mp \frac{\tilde c_{1-\alpha}}{\sqrt n}\Big]_{s \in [t_1,t_2]}. \nonumber
 \end{eqnarray}
\end{corollary}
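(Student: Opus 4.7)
The strategy is to rewrite each coverage event $\{A(s) \in CB(s) \text{ for all } s \in [t_1,t_2]\}$ as a single sup-norm inequality on a suitably weighted (and, for four of the five bands, log-transformed) Nelson-Aalen process, then read off the coverage probability from the convergences already established in Theorem~\ref{thm:delta_meth_conv} (and, for $CB_{dir}^w$, Theorems~\ref{Th.Na.uni} and \ref{Th.Na.uni-What}) by applying the continuous mapping theorem to the sup-functional, followed by a quantile-consistency argument.

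\textbf{The two asymptotic bands.} For $CB_{EP}^a$ and $CB_{HW}^a$, using $g_1(\hat\phi) = (1+\hat\sigma^2)/\hat\sigma$ together with $g_2 \equiv 1$, taking logarithms of the defining inequalities and rearranging shows that coverage is equivalent to
\[
 \sup_{s \in [t_1,t_2]} \Big| \sqrt n\, \hat A_n(s)\, \frac{\log \hat A_n(s) - \log A(s)}{1+\hat\sigma^2(s)}\, g\Big(\tfrac{\hat\sigma^2(s)}{1+\hat\sigma^2(s)}\Big) \Big| \;\le\; c_{1-\alpha}^{g},
\]
with $g = g_1$ and $g = g_2$, respectively. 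Since $A(t_1) > 0$, uniform consistency of $\hat A_n$ guarantees $\hat A_n > 0$ on $[t_1,t_2]$ with probability tending to one, so the log-transform and~\eqref{eq:weak_conv_logA.1} are legitimate. The continuous mapping theorem applied to the supremum functional yields that the left-hand side converges in distribution to $\sup_{s \in [\phi(t_1),\phi(t_2)]} |g(s) B^{0}(s)|$, whose $(1-\alpha)$-quantile is by definition $c_{1-\alpha}^{g}$, so coverage tends to $1-\alpha$ provided the distribution function of this limiting supremum is continuous at its quantile.

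\textbf{The wild bootstrap bands.} For $CB_{EP}^w$ and $CB_{HW}^w$ the same algebraic inversion produces exactly the inequality above but with the random $\tilde c_{1-\alpha}^{g}$ in place of $c_{1-\alpha}^{g}$. I handle this by showing $\tilde c_{1-\alpha}^{g} \oPo c_{1-\alpha}^{g}$: by~\eqref{eq:weak_conv_logA.2} and the continuous mapping theorem, conditionally on $\mathcal{C}_0$ the bootstrap supremum converges in distribution in probability to the same law $\mathfrak L(\sup_{s \in [\phi(t_1),\phi(t_2)]} |g(s) B^0(s)|)$, and a standard quantile-continuity lemma (convergence in distribution of laws to a continuous limit implies convergence of the corresponding quantile functions) translates this into convergence of the conditional quantiles in probability. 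Slutsky's lemma then gives the same asymptotic coverage as in the deterministic-quantile case. The band $CB_{dir}^w$ is handled analogously: its coverage event is $\{\sup_{s \in [t_1,t_2]} |\bs W_n(s)| \le \tilde c_{1-\alpha}\}$, and the combination of Theorem~\ref{Th.Na.uni} (unconditional) with Theorem~\ref{Th.Na.uni-What} (conditional in probability) plus the same quantile-consistency step delivers the asymptotic level.

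\textbf{Main obstacle.} The only non-routine ingredient is the continuity of the distribution function of the limiting supremum (at its $(1-\alpha)$-quantile), both for the weighted Brownian bridge $\sup_{s \in [\phi(t_1),\phi(t_2)]} |g(s) B^0(s)|$ and for $\sup_{s \in [t_1,t_2]} |U(s)|$. Without this, the quantile-consistency step fails and the stated coverage is not guaranteed. Fortunately this is classical: the supremum of a non-degenerate continuous Gaussian process on a compact interval has a continuous (in fact absolutely continuous away from its essential infimum) distribution function, by e.g.\ Tsirelson's theorem, and for the specific weights $g_1, g_2$ this has long been tabulated. All remaining steps — uniform consistency of $\hat\sigma_n^2$ and $\sigma^{*2}$, the continuous mapping theorem, and Slutsky's lemma — are standard.
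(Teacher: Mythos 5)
Your proof is correct and follows essentially the same route as the paper's: invert each band into a sup-norm event, apply Theorem~\ref{thm:delta_meth_conv} with the continuous mapping theorem, and conclude via consistency of the (conditional) quantiles under a continuous limit distribution. The paper's version is terser, citing \cite{janssen03}, Lemma~1 for the quantile-consistency step that you justify directly via continuity of the Gaussian supremum, but the logical skeleton is identical.
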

\begin{remark}\label{rem:WB}

\begin{enumerate}
 \item Note that the wild bootstrap quantile $\tilde c_{1-\alpha}$ does not require an estimate of $\phi$, 
  thereby eliminating one possible cause of inaccuracy within the derivation of the other bands.
  However, the corresponding band $CB_{dir}^w$ has the disadvantage to possibly include negative values.
  \item The confidence bands are only well-defined 
  if the left endpoint $t_1$ of the bands' time interval is larger than the first observed event.
  In particular, these bands yield unstable results for small values of $\hat A_n(t_1)$
  due to the division in the exponential function;
  see \cite{lin1994} for a similar observation.
  \item The present approach directly allows the construction of confidence bands for within-sample comparisons of multiple $A_1, \dots, A_k$.
    For instance, a confidence band for the difference $A_1 - A_2$ may be obtained via quantiles based on the conditional convergence in distribution $\hat W_{1n} - \hat W_{2n} \stackrel{d}{\longrightarrow} U_1 - U_2 \sim Gauss( 0, \psi_1 + \psi_2)$ in probability by
    simply applying the continuous mapping theorem and taking advantage of the independence of $U_1$ and $U_2$;
    see \cite{whitt80} for the continuity of the difference functional.
    For that purpose, the distribution of
    \begin{eqnarray}
     D(t)=\sqrt{n} g(t)(\hat A_{1n}(t)-A_1(t)-(\hat A_{2n}(t)-A_2(t))),
    \end{eqnarray}
with positive weight function $g$  can be approximated by the conditional distribution of 
$\hat D(t)=  g(t) (\hat W_{1n}(t)-\hat W_{2n}(t))$.
With $g\equiv 1$, 
an approximate $(1-\alpha)\cdot 100\%$ confidence band for the difference $A_1 - A_2$ of two cumulative hazard functions on $[t_1,t_2]$ is 
\begin{eqnarray}
\left[\left(\hat A_1(s)-\hat A_2(s)\right)\pm \tilde q_{1-\alpha} / \sqrt{n}\right]_{s\in[t_1,t_2]}, \label{eq:diffCB}
\end{eqnarray}
where 
\begin{eqnarray*}
 \quad \tilde q_{1-\alpha} & = (1-\alpha) \text{ quantile of} \quad 
    \mathfrak L \Big( \sup_{s \in [t_1,t_2]} \big|\hat W_{1n}(s)-\hat W_{2n}(s) \big| \Big| \ {\color{black}\mathcal{C}_0}  \Big).
\end{eqnarray*}
Similar arguments additionally enable common two-sample comparisons. A practical data analysis using other weight functions $g$ in the context of cumulative incidence functions is given in \cite{hieke2013}.
\end{enumerate}
\end{remark}

\begin{remark}[Construction of Confidence Intervals]
\label{rem:cis}

\begin{enumerate}
 \item In particular,
Theorem~\ref{thm:delta_meth_conv} yields a convergence result on $\mathbb{R}^m$
for a finite set of time points $\{s_1, \dots, s_m \}\subset [0,\tau] , m \in \mathbb N$.
Hence, using critical values $\tilde c_{1-\alpha}$ and $\tilde c_{1-\alpha}^g$
obtained from the law of the maximum $\max_{s_1, \dots, s_m}$ instead of the supremum,
 a variant of Corollary~\ref{cor:CBs} specifies simultaneous confidence intervals $I_1 \times \dots \times I_m$ 
 for $(A(s_1), \dots, A(s_m))$ with asymptotic coverage probability $1-\alpha$.
 Since the error multiplicity is taken into account,
 the asymptotic coverage probability of a single such interval $I_j$ for $A(s_j)$ is greater than $1-\alpha$.
 \item Due to the asymptotic independence of the entries of the multivariate Nelson-Aalen estimator,
a confidence region for the value of a multivariate cumulative hazard function $(A_1(t), \dots, A_k(t))$ at time $t \in [0,\tau]$
may be found using \v{S}id\'ak's correction:
Letting $J_1, \dots, J_k$ be pointwise confidence intervals for $A_1(t),$ $\dots,$ $A_k(t)$ with asymptotic coverage probability $(1-\alpha)^{1/k}$,
each found using the wild bootstrap principle,
the coverage probability of $J_1 \times \dots \times J_k$ for $A_1(t) \times \dots \times A_k(t)$ clearly goes to $1 -\alpha$ as $n \rightarrow \infty$.
\end{enumerate}
\end{remark}

%
\subsection{Hypothesis Tests for Equivalence{\color{black}, Inferiority, Superiority,} and Equality}
%
Adapting the principle of confidence interval inclusion as discussed in \cite{wellek10}, Section~3.1, to time-simultaneous confidence bands,
hypothesis tests for equivalence of cumulative hazard functions become readily available.
To this end, let $\ell, u: [t_1,t_2] \rightarrow (0,\infty)$ be positive, continuous functions 
and denote by $(a_n(s),\infty)_{s \in [t_1,t_2]}$ and $[0,b_n(s))_{s \in [t_1,t_2]}$ the one-sided (half-open) analogues of any confidence band of the previous subsection with asymptotic coverage probability $1-\alpha$.
Furthermore, let $A_0: [t_1,t_2] \rightarrow [0,\infty)$ be a pre-specified non-decreasing, continuous function for which equivalence to $A$ shall be tested. More precisely:
\begin{eqnarray*}
 H : \{ A(s) \leq A_0(s) - \ell(s) \text{ or } A(s) \geq A_0(s) + u(s) \text{ for some } s \in [t_1,t_2] \} \\
 \text{vs.} \quad
 K : \{ A_0(s) - \ell(s) < A(s) < A_0(s) + u(s) \text{ for all } s \in [t_1,t_2] \}.
\end{eqnarray*}
\begin{corollary}
\label{cor:equivalence}
 Under the assumptions of Theorem \ref{thm:delta_meth_conv}, a hypothesis test $\psi_n$ of asymptotic level $\alpha$ for $H$ vs $K$ is given by the following decision rule: 
Reject $H$ if and only if the combined two-sided confidence band $(a_n(s),b_n(s))_{s \in [t_1,t_2]}$ 
is fully contained in the region spanned by $(A_0(s) - \ell(s), A_0(s) + u(s))_{s \in [t_1,t_2]} $.
Further, it holds under $K$ that $\E(\psi_n) \rightarrow 1$ as $n\rightarrow \infty$, i.e., $\psi_n$ is consistent.
\end{corollary}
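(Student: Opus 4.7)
The plan is to handle the two parts of the conclusion separately, both by reducing to the underlying band's defining coverage property and to the uniform shrinkage of its width.

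For the level bound, I would fix $\omega$ in $H$ and pick $s^* \in [t_1,t_2]$ where one of the two defining inequalities of $H$ is realised. On $\{\psi_n=1\}$ the containment $(a_n(s),b_n(s)) \subset (A_0(s)-\ell(s), A_0(s)+u(s))$ holds for every $s$; evaluating at $s^*$ forces $A(s^*)\notin(a_n(s^*),b_n(s^*))$. Hence
\begin{equation*}
\{\psi_n=1\} \;\subseteq\; \{\,A\notin(a_n,b_n)\text{ somewhere on }[t_1,t_2]\,\},
\end{equation*}
and by the $(1-\alpha)$-coverage of the underlying two-sided band (Corollary \ref{cor:CBs}) the right-hand side has probability converging to at most $\alpha$, so $\limsup_n \E(\psi_n)\le\alpha$.

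For the consistency claim, I would work under $K$ and exploit compactness of $[t_1,t_2]$: continuity of $A$, $A_0$, $\ell$, $u$ gives
\begin{equation*}
\delta := \min\Big\{\inf_{s}(A_0(s)+u(s)-A(s)),\ \inf_{s}(A(s)-A_0(s)+\ell(s))\Big\} > 0.
\end{equation*}
It then suffices to prove $\sup_{s\in[t_1,t_2]} \max(|a_n(s)-A(s)|,|b_n(s)-A(s)|) \oPo 0$, since this yields $(a_n,b_n)\subset(A-\delta,A+\delta)\subset (A_0-\ell,A_0+u)$ on the whole interval with probability tending to one, whence $\E(\psi_n)\to1$.

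The uniform collapse of the bands to $A$ is the only non-trivial step. For $CB_{dir}^w$ it is immediate from the explicit half-width $\tilde c_{1-\alpha}/\sqrt n$, because $\tilde c_{1-\alpha}=O_P(1)$ by Theorem \ref{Th.Na.uni-What}. For the four exponentiated bands I would expand $\exp(\pm c_n(s)/(\sqrt n \hat A_n(s)))$ via a first-order Taylor argument, using that the critical values are $O_P(1)$ (Theorem \ref{thm:delta_meth_conv} together with continuity of the supremum functional), that $\hat\sigma_n^2$ and $\sigma^{*2}$ are uniformly consistent on $[t_1,t_2]$, and that the hypothesis $A(t_1)>0$ from Theorem \ref{thm:delta_meth_conv} keeps $1/\hat A_n$ uniformly bounded there in probability. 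Combined with the uniform consistency $\sup_s|\hat A_n(s)-A(s)|=o_P(1)$ of the Nelson--Aalen estimator, this yields the required $o_P(1)$ collapse. Organising the Taylor step uniformly in $s$ is the main technical obstacle; once it is in place, the confidence-interval-inclusion principle does all the remaining work.
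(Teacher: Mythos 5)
Your consistency argument is essentially the paper's: pick a positive margin $\delta$ on $K$ by compactness/continuity, show the band endpoints $a_n,b_n$ collapse uniformly to $A$ in probability (the paper appeals directly to uniform consistency of the Nelson--Aalen estimator and of the wild bootstrap quantiles; your Taylor-expansion route makes the same fact more explicit), and conclude $\E(\psi_n)\to 1$. That part is sound.

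The level argument, however, has a genuine gap rooted in a misreading of the construction. The corollary defines $(a_n,\infty)$ and $[0,b_n)$ as \emph{one-sided} bands, \emph{each} with asymptotic coverage $1-\alpha$. Combining them, the two-sided band $(a_n,b_n)$ has asymptotic coverage only $\ge 1-2\alpha$ (by a union bound), not $1-\alpha$. Your chain
\begin{equation*}
\{\psi_n=1\}\subseteq\{A\notin(a_n,b_n)\text{ somewhere}\}
\end{equation*}
then yields only $\limsup_n \E(\psi_n)\le 2\alpha$, which is not what is claimed. The fix is precisely the extra precision you already have in hand but discard: at the point $s^*$ realising (say) $H_1$, the containment on $\{\psi_n=1\}$ forces $A(s^*)<a_n(s^*)$ specifically --- failure of the \emph{lower} one-sided band --- not merely $A(s^*)\notin(a_n(s^*),b_n(s^*))$. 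The paper exploits this: writing $H=H_1\cup H_2$ and treating each piece symmetrically, under $H_1$ one bounds the rejection probability by $P(A(s)<a_n(s)\text{ for some }s)\to\alpha$, using the one-sided $(1-\alpha)$ coverage of $(a_n,\infty)$ alone; similarly under $H_2$ with $[0,b_n)$. This is exactly the Wellek confidence-interval-inclusion principle: in equivalence testing the error can occur on only one side at a time, so each one-sided band may spend the full $\alpha$ budget, making the test exactly (not conservatively) level $\alpha$. If you instead insist on a genuinely two-sided $(1-\alpha)$ band for $(a_n,b_n)$, your argument does give level $\le\alpha$, but it is a different, strictly more conservative test than the one stated.
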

{\color{black}Similar arguments lead to analogue one-sided tests for the inferiority or superiority of the true cumulative hazard function to a prespecified function $A_0$.}
Moreover, statistical tests for equality of two cumulative hazard functions can be constructed
using the weak convergence results of Remark~\ref{rem:WB}(c):
$$H_{=} : \{ A_1 \equiv A_2 \text{ on } [t_1,t_2] \} \quad \text{vs} \quad K_{\neq} : \{ A_1(s) \neq A_2(s) \text{ for some } s \in [t_1,t_2] \}.$$
Corollary~\ref{cor:ks_test} below yields an asymptotic level $\alpha$ test for $H_{=}$.
\cite{bajorunaite07} and \cite{dobler14} used similar two-sided tests for comparing cumulative incidence functions in a two-sample problem.
\begin{corollary}[A Kolmogorov-Smirnov-type test]
\label{cor:ks_test}
  Under the assumptions of Theorem~\ref{thm:delta_meth_conv} and letting $g$ again be a positive weight function,
  $$\varphi^{KS}_n = \mathbf 1\{ \sup_{s \in [t_1,t_2]} \sqrt{n} g(s) | \hat A_{1n}(s) - \hat A_{2n}(s) | > \tilde q_{1-\alpha} \}$$
 defines a consistent, asymptotic level $\alpha$ resampling test for $H_=$ vs. $K_{\neq}$.
 Here $\tilde q_{1-\alpha}$ is the $(1-\alpha)$-quantile of 
  $\mathfrak L \big( \sup_{s \in [t_1,t_2]} \big|\hat D(s) \big| \big| \ {\color{black}\mathcal{C}_0}  \big)$.
\end{corollary}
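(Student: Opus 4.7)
The plan is to reduce the statement to (i) a weak convergence result for the test statistic under $H_=$, (ii) a conditional weak convergence result for the bootstrap quantile, and (iii) a divergence argument under $K_{\neq}$. Under $H_=$ we have $A_1 \equiv A_2$ on $[t_1,t_2]$, so
\begin{equation*}
\sqrt{n}\,g(s)\bigl(\hat A_{1n}(s)-\hat A_{2n}(s)\bigr)
= g(s)\bigl(W_{1n}(s)-W_{2n}(s)\bigr),\qquad s\in[t_1,t_2].
\end{equation*}
By Theorem~\ref{Th.Na.uni}, $(W_{1n},W_{2n})'\stackrel{d}{\longrightarrow}(U_1,U_2)'$ on $\mathfrak D[0,\tau]^2$ with $U_1,U_2$ independent continuous Gaussian martingales. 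Since $g$ is continuous and the functionals of subtraction, multiplication by $g$, and taking the supremum are continuous at processes with continuous paths, the continuous mapping theorem yields
\begin{equation*}
T_n := \sup_{s\in[t_1,t_2]} \sqrt{n}\, g(s)\bigl|\hat A_{1n}(s)-\hat A_{2n}(s)\bigr|
\stackrel{d}{\longrightarrow} T:=\sup_{s\in[t_1,t_2]} g(s)\bigl|U_1(s)-U_2(s)\bigr|.
\end{equation*}

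Next I would turn to the wild bootstrap quantile. By Theorem~\ref{Th.Na.uni-What} (applied jointly, using the independence of the white-noise processes $G_1,G_2$) the conditional law of $(\hat W_{1n},\hat W_{2n})'$ given $\mathcal C_0$ converges in probability on $\mathfrak D[0,\tau]^2$ to $(U_1,U_2)'$. Applying the same continuous mapping as above conditionally, the conditional law of $\sup_{s\in[t_1,t_2]} |\hat D(s)|$ given $\mathcal C_0$ converges to $\mathfrak L(T)$ in probability. Since $U_1-U_2$ is a centered Gaussian process with continuous covariance and non-degenerate variance on $[t_1,t_2]$, standard results on suprema of Gaussian processes (e.g.\ Tsirelson's theorem) imply that the distribution function of $T$ is continuous on $(0,\infty)$. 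Hence the conditional $(1-\alpha)$-quantile satisfies $\tilde q_{1-\alpha}\stackrel{P}{\longrightarrow} q_{1-\alpha}$, where $q_{1-\alpha}$ is the $(1-\alpha)$-quantile of $\mathfrak L(T)$.

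Combining the two displays via Slutsky's lemma gives, under $H_=$,
\begin{equation*}
\P(T_n > \tilde q_{1-\alpha}) \longrightarrow \P(T > q_{1-\alpha}) = \alpha,
\end{equation*}
so $\varphi_n^{KS}$ has asymptotic level $\alpha$. For consistency, assume $K_{\neq}$, so that there exists $s^\ast\in[t_1,t_2]$ with $A_1(s^\ast)\neq A_2(s^\ast)$. The uniform consistency of the Nelson-Aalen estimators (itself a consequence of the weak convergence in Theorem~\ref{Th.Na.uni} together with~\eqref{Pre.ass11}) gives
\begin{equation*}
\sup_{s\in[t_1,t_2]} g(s)\bigl|\hat A_{1n}(s)-\hat A_{2n}(s)\bigr|
\stackrel{P}{\longrightarrow} \sup_{s\in[t_1,t_2]} g(s)|A_1(s)-A_2(s)| \;>\;0,
\end{equation*}
so $T_n\stackrel{P}{\longrightarrow}\infty$, while $\tilde q_{1-\alpha}=O_P(1)$ by the argument above (its conditional limit law does not depend on the hypothesis, as the bootstrap only uses $G_j N_j$ with centered multipliers). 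Therefore $\P(\varphi_n^{KS}=1)\to 1$.

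The part requiring the most care is step (ii), and specifically the passage from conditional weak convergence of the bootstrap process to convergence of the conditional quantile: one must combine the in-probability conditional convergence of Theorem~\ref{Th.Na.uni-What} with the continuity of the cdf of $T$ at $q_{1-\alpha}$. This is a standard but delicate argument, most cleanly carried out via a subsequence principle so that along any subsequence one can pass to almost sure conditional convergence and then invoke the Portmanteau theorem at the continuity point $q_{1-\alpha}$.
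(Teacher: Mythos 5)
Your proposal is correct and follows essentially the same route as the paper's own (much terser) proof: continuous mapping under $H_=$, convergence in probability of the conditional quantile via the continuity of the limit distribution (the paper delegates this to Lemma~1 of Janssen and Pauls rather than arguing via Portmanteau and subsequences as you do), and consistency under $K_{\neq}$ from the uniform consistency of the Nelson--Aalen estimators, the $\sqrt{n}$ factor, and the boundedness in probability of the quantile. Your write-up simply fills in the standard details the paper leaves implicit.
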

Similarly, Theorem~\ref{thm:delta_meth_conv} enables the construction of other tests, e.g., such of Cram\'er-von Mises-type.
Furthermore, by taking the suprema over a discrete set $\{s_1,\dots, s_m\} \subset [0,\tau]$,
the Kolmogorov-Smirnov test of Corollary~\ref{cor:ks_test} can also be used to test
\begin{align*}
 & \tilde H_{=} : \{ A_1(s_j) = A_2(s_j) \text{ for all } 1\le j\leq m \} \\
 & \quad \text{vs.} \quad \tilde K_{\neq} : \{ A_1(s_j) \neq A_2(s_j) \text{ for some }1\le j\leq m \} .
\end{align*}
Note that in a similar way, two-sample extensions of Corollaries \ref{cor:equivalence} and \ref{cor:ks_test} can be established following \cite{dobler14}.

{\color{black}
\subsection{Tests for Proportionality}\label{sec:proptest}

A major assumption of the widely used \cite{cox72} regression model is the assumption of proportional hazards over time. 
Several authors have developed procedures for testing the null hypothesis of proportionality, see e.g. \cite{gill1987simple}, \cite{lin1991goodness}, \cite{grambsch1994proportional}, \cite{hess1995graphical}, \cite{scheike2004estimation} or \cite{kraus2007data} and the references cited therein.
We apply our theory to derive a non-parametric test for proportional hazards assumption of two samples in our very general framework, 
covering two-sample right-censored and left-truncated multi-state models.
The framework is an unpaired two-sample model given by independent counting processes $N^{(1)}, N^{(2)}$ and predictable processes $Y^{(1)}, Y^{(2)}$,
assuming the conditions of Section~\ref{sec:model} for each group,
and with sample sizes $n_1$ and $n_2$, respectively.
Let again $J^{(j)}(t) = \mathbf 1\{ Y^{(j)}(t) > 0 \}$, $j=1,2$.
Denote by $\hat A^{(j)}_{n_j} = \int_{(0,t]} \frac{J^{(j)}(s)}{Y^{(j)}(s)} d N^{(j)}$ the Nelson-Aalen estimator 
of the cumulative hazard functions $A^{(j)}$ and by $\alpha^{(j)}$ the corresponding rates, $j=1,2$.
To motivate a suitable test statistic we make use of the following equivalence between hazards proportionality and equality of both cumulative hazards: 
$$
\alpha^{(1)}(t) = c \ \alpha^{(2)}(t) \ \text{in} \ t \in [0,\tau] \  \text{for} \ c > 0 
\ \ \Longleftrightarrow \ \
A^{(1)}(t) = c \ A^{(2)}(t) \ \text{in} \ t \in [0,\tau] \  \text{for} \ c > 0 ,
$$
which, as the null hypothesis of interest, is denoted by $H_{0,\text{prop}}$.
In a natural way similar to \cite{gill1987simple} in the simple survival setup this leads to statistics of the form
\begin{align*}
T_{n_1,n_2} = \rho \Big( \ \sqrt{\frac{n_1 n_2}{n}} \frac{\hat A^{(2)}_{n_2}}{\hat A^{(1)}_{n_1}} \ , \ \sqrt{\frac{n_1 n_2}{n}} \frac{\hat A^{(2)}_{n_2}(\tau)}{\hat A^{(1)}_{n_1}(\tau)} \ \Big), 
\end{align*}
$n=n_1+n_2$, where $\rho$ is an adequate distance on $\mathfrak D[0,\tau]$, e.g. $\rho(f,g) = \sup w |f - g|$ (leading to Kolmogorov-Smirnov-type tests),
\ $\rho(f,g) = \int (f-g)^2 w^2 d \lambda \!\! \lambda$
(leading to Cram\'er-von-Mises-type tests), 
where $w: [0,\tau] \rightarrow [0,\infty)$ is a suitable weight function.
Later on, we choose $w = \hat A^{(1)}_{n_1}$ which ensures the evaluation of $\rho$ on $\{ \hat A^{(1)}_{n_1} > 0 \}$.
Let $\hat W_{n_1}^{(1)}$ and $\hat W_{n_2}^{(2)}$ be the obvious wild bootstrap versions of the sample-specific centered Nelson-Aalen estimators;
cf. \eqref{NA.uni.Wnh}.

\begin{theorem}
\label{thm:prop}
 Let $\rho$ be either the above Kolmogorov-Smirnov- or the Cram\'er-von Mises-type statistic with $w = \hat A^{(1)}_{n_1}$.
 If $n_1 / n \rightarrow p \in (0,1)$ as $\min(n_1,n_2) \rightarrow \infty$,
 then the test for $H_{0,\textnormal{prop}}$
 $$ \varphi_{n_1,n_2}^{\textnormal{prop}} = \mathbf 1 \{ T_{n_1,n_2} > \tilde q_{1 - \alpha} \} $$
 has asymptotic level $\alpha$ under $H_{0,prop}$ and asymptotic power 1 on the whole complement of $H_{0,prop}$.
 Here $\tilde q_{1 - \alpha}$ is the $(1 - \alpha)$-quantile of 
 $$\mathfrak L \Big(
 \rho\Big( \sqrt{\frac{n_1}{n}}  \frac{\hat W_{n_2}^{(2)}}{\hat A_{n_1}^{(1)}} - \sqrt{\frac{n_2}{n}}  \hat W_{n_1}^{(1)} \frac{\hat A_{n_2}^{(2)}}{[\hat A_{n_1}^{(1)}]^2} \ , \
 \sqrt{\frac{n_1}{n}} \frac{ \hat W_{n_2}^{(2)}(\tau)}{\hat A_{n_1}^{(1)}(\tau)} - \sqrt{\frac{n_2}{n}}  \hat W_{n_1}^{(1)}(\tau) \frac{\hat A_{n_2}^{(2)}(\tau)}{[\hat A_{n_1}^{(1)}(\tau)]^2} \Big) 
 \Big| \ \mathcal{C}_0 \Big).$$ 
\end{theorem}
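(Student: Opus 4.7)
Write $R_n(t) = \hat A^{(2)}_{n_2}(t)/\hat A^{(1)}_{n_1}(t)$ and $R(t) = A^{(2)}(t)/A^{(1)}(t)$. My strategy is to reduce every quantity to linear functionals of the sample-specific centered Nelson-Aalen processes $W^{(j)}_{n_j}$ and then invoke Theorems \ref{Th.Na.uni} and \ref{Th.Na.uni-What} separately on each sample, exploiting the sample independence. The crucial algebraic step (obtained by adding and subtracting $A^{(2)}(t)$ inside the parenthesis) is the exact identity
\[
\hat A^{(1)}_{n_1}(t) \sqrt{\tfrac{n_1 n_2}{n}}\bigl( R_n(t) - R(t) \bigr) = \sqrt{\tfrac{n_1}{n}}\, W^{(2)}_{n_2}(t) - \sqrt{\tfrac{n_2}{n}}\, R(t)\, W^{(1)}_{n_1}(t),
\]
which I would use to entirely sidestep a functional delta method for $(f,g)\mapsto f/g$. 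Under $H_{0,\textnormal{prop}}$, $R \equiv c$, so this right-hand side converges weakly on $\mathfrak D[0,\tau]$ jointly with its value at $\tau$ to $V(t) := \sqrt{p}\, U^{(2)}(t) - c\sqrt{1-p}\, U^{(1)}(t)$ by Theorem \ref{Th.Na.uni} applied to the two independent samples. Calling $X_n$ the right-hand side, the full weighted statistic can be written as $\sup_t | X_n(t) - (\hat A^{(1)}_{n_1}(t)/\hat A^{(1)}_{n_1}(\tau)) X_n(\tau) |$; the continuous mapping theorem then gives $T_{n_1,n_2} \oDo \rho\bigl( V - \tfrac{A^{(1)}}{A^{(1)}(\tau)} V(\tau), 0 \bigr)$, and an analogous argument handles the weighted Cram\'er-von Mises functional.

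The same algebraic manipulation applied to the bootstrap quantity inside $\rho$ yields
\[
\sqrt{\tfrac{n_1}{n}}\, \hat W^{(2)}_{n_2}(t) - \sqrt{\tfrac{n_2}{n}}\, R_n(t)\, \hat W^{(1)}_{n_1}(t),
\]
i.e., a linear combination of the two independent bootstrap processes. By Theorem \ref{Th.Na.uni-What} applied to each sample together with the independence of the two white-noise families, $(\hat W^{(1)}_{n_1}, \hat W^{(2)}_{n_2})$ converges conditionally given $\mathcal{C}_0$ in probability to $(U^{(1)}, U^{(2)})$ with independent margins. Uniform consistency of the Nelson-Aalen estimators together with $R \equiv c$ under $H_{0,\textnormal{prop}}$ then yields, by Slutsky and the continuous mapping theorem, that the conditional law of the bootstrap $\rho$-statistic converges in probability to the same $\mathfrak{L}(\rho(V - \tfrac{A^{(1)}}{A^{(1)}(\tau)} V(\tau), 0))$. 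A standard subsequence argument (cf.\ Remark \ref{rm:wc}(b)) forces $\tilde q_{1-\alpha}$ to converge in probability to the $(1-\alpha)$-quantile of this limit, whence $P(T_{n_1,n_2} > \tilde q_{1-\alpha}) \to \alpha$.

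For consistency on the complement of $H_{0,\textnormal{prop}}$, uniform consistency of the Nelson-Aalen estimators and continuity of $\rho$ will give
$T_{n_1,n_2}/\sqrt{n_1 n_2/n} \oPo \rho\bigl( A^{(2)} - (A^{(2)}(\tau)/A^{(1)}(\tau)) A^{(1)}, 0 \bigr)$,
which is strictly positive whenever $R$ is non-constant on $[0,\tau]$; since $n_1 n_2/n \to \infty$, this forces $T_{n_1,n_2} \to \infty$. On the other hand, the bootstrap reduction above remains conditionally $O_P(1)$ in probability (the conditional limit persists with $R$ in place of $c$, and $R$ is bounded on any $[t_0,\tau]$), so $\tilde q_{1-\alpha} = O_P(1)$, and power tending to $1$ follows.

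The main obstacle I anticipate is the behaviour near $t = 0$, where the unweighted ratio $R_n$ is erratic and the deterministic $R$ may diverge, ruling out any naive use of the functional delta method for $(f,g) \mapsto f/g$. The deliberate choice $w = \hat A^{(1)}_{n_1}$ produces the algebraic cancellations displayed above and keeps all limit arguments valid on the full interval $[0,\tau]$ without any truncation; I regard this, together with the careful propagation of the two-sample independence through the conditional bootstrap limit, as the load-bearing ingredients of the proof.
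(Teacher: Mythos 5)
Your proof is correct, and it takes a genuinely different route from the paper's. The paper proves Hadamard differentiability of the ratio functional $\phi:(f,g)\mapsto f/g$ on $\mathfrak D^2_{>0}[t_0,\tau]$ (tangentially to continuous functions), computes its derivative $\phi'_{(f,g)}(h_1,h_2) = h_1/g - h_2 f/g^2$, and then invokes the functional delta method for both the observed statistic and the wild bootstrap, verifying that both lead (unconditionally, resp.\ conditionally in probability) to the same Gaussian limit; the power statement is then obtained from $T_{n_1,n_2}\to\infty$ in probability while the conditional quantiles stay bounded. Your approach replaces the linearization step by the exact algebraic identity
$\hat A^{(1)}_{n_1}\sqrt{n_1n_2/n}\,(R_n - R) = \sqrt{n_1/n}\,W^{(2)}_{n_2} - R\,\sqrt{n_2/n}\,W^{(1)}_{n_1}$,
which, after multiplying through by the chosen weight $w=\hat A^{(1)}_{n_1}$, makes the $T$-statistic an explicit continuous functional of the two sample-specific centered Nelson--Aalen processes, with no linearization error and hence no Hadamard argument needed for the observed statistic. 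What this buys: the analysis of the test statistic reduces to Theorem~\ref{Th.Na.uni}, Slutsky, and continuous mapping, and (as you note) the observed-statistic limit argument works on all of $[0,\tau]$ without the $t_0>0$ truncation the paper needs to state the Hadamard result. What it costs: the identity is exactly the linearization with zero remainder in the one specific combination that the weight produces, so the argument is less modular than the delta-method route --- a different weight would destroy the cancellation and force you back to the paper's machinery. Note also that you still need uniform-in-$t$ control of $R_n$ in the conditional step (the term $\sqrt{n_2/n}\,R_n\,\hat W^{(1)}_{n_1}$ in the multiplied bootstrap quantity), which, near $t=0$, rests on the same $0/0=0$ convention and cancellation by the vanishing weight that the paper invokes only implicitly; your proof inherits this subtlety rather than eliminating it, and a careful writeup should spell it out. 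Finally, the two-sample joint conditional convergence $(\hat W^{(1)}_{n_1},\hat W^{(2)}_{n_2})\to(U^{(1)},U^{(2)})$ that you invoke is indeed justified by applying Theorem~\ref{Th.Na.uni-What} to each sample and exploiting the independence of the two white-noise families, a point the paper uses implicitly as well.
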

%
%
%

}

%
\section{Simulation Study}
%
\label{sec:simus}
The motivating example behind the present simulation study is the SIR-3 data of Section \ref{sec:dataex}. The setting is a specification of Example \ref{ex:cov_models}(a) called {\it illness-death model  with recovery}. As illustrated in the multistate pattern of Figure \ref{fig:illnessdeath}, the model has state space  $\mathcal S=\lbrace 0,1,2\rbrace$ and includes the transition hazards $\alpha_{01},\ \alpha_{10},\ \alpha_{02},$ and $\alpha_{12}$.
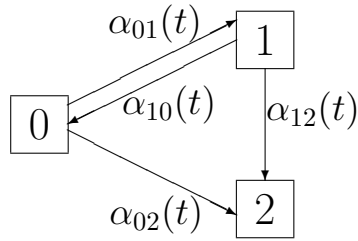
\begin{figure}[ht]
  \begin{center}
  \setlength{\unitlength}{0.75cm}
 \makebox{ \begin{picture}(5,6)(-1,-3)
    
    \put(-2,0){\framebox(1,1){\Large 0}}
    
    \put(-1,0.85){\vector(2,1){3}}\put(2,1.5){\framebox(1,1){\Large 1}}
    \put(-1,0.4){\vector(2,-1){3}}\put(2,-1.5){\framebox(1,1){\Large 2}}
    \put(2.5,1.5){\vector(0,-11){2}}

    \put(2,2){\vector(-2,-1){3}}

    \put(0.05,1.8){\makebox(1,1){\large$\alpha_{01}(t)$}}
    \put(0.075,-1.6){\makebox(1,1){\large$\alpha_{02}(t)$}}
    \put(2.85,0.25){\makebox(1,1){\large$\alpha_{12}(t)$}}
    \put(0.3,0.35){\makebox(1,1){\large$\alpha_{10}(t)$}}

  \end{picture}}  
  \caption{Illness-death model with recovery and transition hazards $\alpha_{01},\ \alpha_{10},\ \alpha_{02},$ and $\alpha_{12}$ at time $t$.}
    \label{fig:illnessdeath}
  \end{center}
\end{figure}
The simulation of the underlying quantities is based on the methodology suggested by \cite{allignol11} generalized to the time-inhomogeneous Markovian multistate framework, which can be seen as a nested series of competing risks experiments. More precisely, the individual initial states are derived from the proportions of individuals at $t=0$ and the censoring times are obtained from a multinomial experiment using probability masses equal to the increments of the censoring Kaplan-Meier estimate originated from the SIR-3 data. Similarly, event times are generated according to a multinomial distribution with probabilities given by the increments of the original Nelson-Aalen estimators. These times are subsequently included into the multistate simulation algorithm described in \cite{beyersmann12a}, Section 8.2. Since censoring times are sampled independently and each simulation step is only based on the current time and the current state, the resulting data follows a Markovian structure. A more formal justification of the multistate simulation algorithm can be found in \cite{gill1990survey} and Theorem II.6.7 in \cite{abgk93}.
\begin{table}
\caption{Mean number of events per transition on [5,30] provided by the simulation study of Section \ref{sec:simus}}
\label{tab:nevents}
\centering{\fbox{
\begin{tabular}{lllll}
\multirow{2}{*}{Sample size} & \multicolumn{4}{c}{Transition}\\ \cline{2-5} 
& $1\rightarrow 0$   & $0\rightarrow 1$   & $0\rightarrow 2$   & \multicolumn{1}{c}{$1\rightarrow 2$} \\ \hline
93  & 20.1 &4.3&   43.9    & 10.1   \\
186 & 42.7 & 8.3  & 96.5  & 21.7\\
373 & 85.6 & 17.0 & 193.4 & 43.7\\
747 & 170.9 & 33.9  & 387.4 & 87.4\\
747*& 171   & 34    &  387     & 87\\ \hline *original data	
\end{tabular}}}

\end{table}
We consider three different sample sizes: The original number of 747 patients is stepwisely reduced to {\color{black}373, 186, and 93 patients}. For each scenario we simulate 1000 studies. As an overview, the mean number of events for each possible transition and scenario is illustrated in Table 1. 

The mean number of events regarding 747 patients reflects the original number of events.
All numbers are restricted to the time interval [5,30], which is chosen due to a small amount of events before $t=5$ (left panel of Figure \ref{fig:CBdataex}). Further, less than 10\% of all individuals are still under observation after day 30. In particular, asymptotic approximations tend to be poor at the left- and right-hand tails; cf. Remark \ref{rem:WB}(b) and \cite{lin97}. 

Utilizing the \texttt{R}-package \texttt{sde} \citep{sde2014}, the quantiles $c_{1-\alpha}^g$ in (\ref{eq:CBs}) of each single study are empirically estimated by simulating 1000 sample paths of a standard Brownian bridge. These quantiles are separately derived for both the Aalen-  and Greenwood-type variance estimates (\ref{eq:varaalen}) and (\ref{eq:vargreenwood}). The bootstrap critical values are based on 1000 bootstrap realizations of $\hat{\textbf{\textit{W}}}_n$ for each simulation step including both standard normal and centered Poisson variates with variance one. The latter is motivated by a slightly better performance compared to standard normal multipliers (\citealp{beyersmann12b}, and \citealp{dobler15a}).
{\color{black}
 Furthermore, \cite{liu88} argued in a classical (linear regression) problem that wild bootstrap weights with skewness equal to one
 satisfy the second order correctness of the resampling approach.
 According to the cited simulation results, a similar result might hold true in our context,
 as the Poisson variates have skewness equal to one and standard normal variates are symmetric.
 A careful analysis of the convergence rates, however, is certainly beyond the scope of this article. In order to guarantee statistical reliability, we do not derive confidence bands for sample sizes and transitions with a mean number of observed transitions distinctly smaller than 20. The nominal level is set to $\alpha=0.05$. All simulations are performed with the \texttt{R}-computing environment version 3.3.2 \citep{Rcite}.
}

Following Table \ref{tab:CovProb}, {\color{black} almost all} bands constructed via Brownian bridges consistently tend to be rather conservative in our setting, i.e., result in too broad bands. Here, the usage of the Greenwood-type variance estimate yields more accurate coverage probabilities compared to the Aalen-type estimate. 
In contrast, the wild bootstrap approach {\color{black} mostly outperforms the Brownian bridge procedures: The log-transformed wild bootstrap bands approximately keep the nominal level even in the smaller sample sizes, except for the $0 \rightarrow 1$ transition with smallest sample size (corresponding to only 17 events in the mean; cf. Table~\ref{tab:nevents}).
 We also observe that the log-transformation in general improves coverage for the wild bootstrap procedure.
 The current simulation study showed no clear preference for the choice of weight.
 Note that all wild bootstrap bands for transition $0\rightarrow 2$ show a similar, but mostly reduced conservativeness compared to the bands provided by Brownian bridges.} We have to emphasize that coverage probabilities for the cumulative hazard functions are drastically decreased to approximately 75\% in all sample sizes if log-transformed pointwise confidence intervals would wrongly be interpreted time-simultaneously (results not shown).

\begin{sidewaystable}
\caption{Empirical coverage probabilities (\%) from the simulation study of Section \ref{sec:simus} separately for each transition and different simulated number of individuals}
\label{tab:CovProb}
\centering{\begin{tabular}{|lc|cccccccccc|}
\hline
\multicolumn{2}{|l|}{\multirow{4}{*}{}} & \multicolumn{10}{c|}{Type of Confidence Band}                                                                                                                                   \\ \cline{3-12} 
\multicolumn{2}{|l|}{}                  & \multicolumn{4}{c}{Brownian Bridge}                               & \multicolumn{6}{c|}{Wild Bootstrap}                                                                         \\ \cline{3-12} 
\multicolumn{2}{|l|}{}                  & \multicolumn{2}{c}{95\% log EP} & \multicolumn{2}{c}{95\% log HW} & \multicolumn{2}{c}{95\% log EP} & \multicolumn{2}{c}{95\% log HW}     & \multicolumn{2}{c|}{95\% direct}    \\ \cline{3-12} 
\multicolumn{2}{|l|}{}                  & Aalen          & Green-         & Aalen          & Green-         & Poisson        & Standard       & \multirow{2}{*}{Poisson} & Standard & \multirow{2}{*}{Poisson} & Standard \\
Transition                        & $N$ &                & wood           &                & wood           &                & normal         &                          & normal   &                          & normal   \\ \hline
\multirow{2}{*}{$0\rightarrow 1$} & 373 & 96.4           & 95.9           & 95.5           & 95.3           & 92.5           & 92.5           & 92.5                     & 92.5     & 91.4                     & 91.9     \\
                                  & 747 & 97.7           & 97.3           & 97.2           & 97.0           & 95.0           & 94.9           & 95.2                     & 95.0     & 92.9                     & 93.2     \\ \hline
\multirow{4}{*}{$0\rightarrow 2$} & 93  & 98.0           & 97.1           & 98.4           & 97.3           & 97.6           & 97.8           & 97.3                     & 96.0     & 96.6                     & 96.6     \\
                                  & 186 & 98.3           & 95.5           & 98.9           & 97.4           & 97.2           & 98.2           & 98.0                     & 96.1     & 96.2                     & 96.2     \\
                                  & 373 & 98.1           & 95.0           & 98.2           & 96.9           & 97.2           & 97.3           & 97.1                     & 97.1     & 96.0                     & 96.3     \\
                                  & 747 & 98.6           & 96.2           & 98.8           & 97.7           & 97.7          & 97.8           & 97.4                     & 97.8     & 96.0                     & 96.2     \\ \hline
\multirow{4}{*}{$1\rightarrow 0$} & 93  & 97.0           & 94.9           & 97.0           & 95.2           & 95.1           & 95.1           & 94.8                     & 94.8     & 93.7                     & 93.7     \\
                                  & 186 & 97.3           & 95.8           & 97.7           & 96.1           & 95.6           & 95.7           & 95.7                     & 95.4     & 94.5                     & 94.3     \\
                                  & 373 & 97.2           & 96.3           & 97.9           & 97.0           & 95.2           & 95.3           & 95.9                     & 96.3     & 95.2                     & 95.3     \\
                                  & 747 & 97.8           & 96.8           & 97.5           & 96.9           & 96.6           & 96.6           & 95.9                     & 96.0     & 96.1                     & 96.3     \\ \hline
\multirow{3}{*}{$1\rightarrow 2$} & 186 & 97.5           & 96.7           & 97.2           & 96.2           & 94.7           & 94.5           & 94.3                     & 94.7     & 93.2                     & 93.3     \\
                                  & 373 & 98.2           & 97.7           & 98.2           & 97.8           & 95.8           & 95.8           & 95.1                     & 95.2     & 94.6                     & 94.3     \\
                                  & 747 & 97.2           & 96.6           & 96.6           & 96.0           & 94.4           & 95.5           & 94.3                     & 94.7     & 94.9                     & 95.1     \\ \hline
\end{tabular}}
\flushleft{EP: equal-precision band; HW: Hall-Wellner band.}
\end{sidewaystable}

{\color{black}
The second set of simulations follows the test for proportional hazards derived in Theorem \ref{thm:prop} with regard to keeping the preassigned error level under the null hypothesis. For that purpose, we assume a competing risks model with two competing events separately for two unpaired patient groups. For an illustration, see for instance, Figure 3.1 in \cite{beyersmann12a}. 

We consider four different constant hazard scenarios: (I) the hazards for the type-1 event are set to  $\alpha^{(1)}_{01}(t)=\alpha^{(2)}_{01}(t)=2$ (no effect on the type-1 hazard, in particular, a hazard ratio of $c=1$); (II) $\alpha^{(1)}_{01}(t)=1$ and $\alpha^{(2)}_{01}(t)=2$ (large effect); (III) $\alpha^{(1)}_{01}(t)=\alpha^{(2)}_{01}(t)=1$; (IV) $\alpha^{(1)}_{01}(t)=1$ and $\alpha^{(2)}_{01}(t)=1.5$ (moderate effect). In each scenario, we set $\alpha^{(1)}_{02}=\alpha^{(2)}_{02}(t)=2$, in particular, we consistently assume no group effect on the competing hazard. Further, scenario-specific administrative censoring times are chosen such that approximately 25\% of the individuals are censored. The simulations designs are selected such that we include different effect sizes as well as different type-1 hazard ratio configurations with respect to the competing hazards. We consider a balanced design with $n_1=n_2=n\in\lbrace 125,250,500,1000\rbrace$. The right-hand tail of the domain of interest is set to $\tau=0.3$. Simulation of the event times and types follows the procedure explained in Chapter 3.2 of \cite{beyersmann12a}. As before, we simulate 1000 studies for each scenario and sample size configuration, whereas the critical values of the Kolmogorov-Smirnov-type and Cram\'er-von-Mises-type statistics from Section~\ref{sec:proptest} are derived from 1000 bootstrap samples including both standard normal and centered Poisson variates with variance one.

The results for the type I error rates (for $\alpha=0.05$) are displayed in Table \ref{tab:proptest}. As expected from consistency, the higher the number of patients the better is the type I error approached for both test statistics in each scenario. 
Except for Scenario (II), all procedures keep the type I error rate quite accurately for $n\geq 500$.
For smaller sample sizes, all tests tend to be conservative with a particular advantage for the Kolmogorov-Smirnov statistic.


 \begin{sidewaystable}[]
\centering
\caption{Simulated size of $\phi_{n_1,n_2}^{\text{prop}}$ for nominal size $\alpha=5\%$ under different sample sizes and constant hazard configurations. In each scenario $\tau=0.3$ and $25\%$ of individuals are censored.}

\begin{tabular}{|l|cccc|cccc|cccc|cccc|}
\hline
      & \multicolumn{4}{c|}{Scenario I}                                      & \multicolumn{4}{c|}{Scenario II}                                     & \multicolumn{4}{c|}{Scenario III}                                    & \multicolumn{4}{c|}{Scenario IV}                                     \\ \cline{2-17} 
      & \multicolumn{2}{c|}{KMS}         & \multicolumn{2}{c|}{CvM} & \multicolumn{2}{c|}{KMS}         & \multicolumn{2}{c|}{CvM} & \multicolumn{2}{c|}{KMS}         & \multicolumn{2}{c|}{CvM} & \multicolumn{2}{c|}{KMS}         & \multicolumn{2}{c|}{CvM} \\ \hline
$n_i$ & SN    & \multicolumn{1}{l|}{Poi} & SN          & Poi        & SN    & \multicolumn{1}{l|}{Poi} & SN          & CvM        & SN    & \multicolumn{1}{l|}{Poi} & SN          & Poi        & SN    & \multicolumn{1}{l|}{Poi} & SN         & Poi         \\ \hline
125   &   0.029    &     0.024                     & 0.033            &    0.030        & 0.029      &           0.026               & 0.027            &  0.023          & 0.045 & 0.041                    & 0.028       & 0.030      & 0.046 & 0.042                    & 0.030      & 0.025       \\
250   & 0.035 & 0.039                    & 0.039       & 0.040      & 0.040 & 0.038                    & 0.037       & 0.034      & 0.039 & 0.040                    & 0.037       & 0.034      & 0.034 & 0.034                    & 0.033      & 0.030       \\
500   & 0.057 & 0.054                    & 0.059       & 0.060      & 0.034 & 0.038                    & 0.040       & 0.041      & 0.056 & 0.053                    & 0.047       & 0.045      & 0.044 & 0.044                    & 0.043      & 0.044       \\
1000  &  0.050     &     0.050                     &    0.047         &    0.047        &  0.048     &     0.049                     &    0.043         &        0.046    &  0.047     &        0.049                  &      0.045       &   0.048         &  0.058    &        0.059                  &  0.053          &      0.056       \\ \hline
\end{tabular}
\flushleft{KMS: Kolmogorov-Smirnov-type statistic, CvM: Cram\'er-von-Mises-type statistic; SN: standard normal multiplier; Poi: centered poisson multiplier.}
\label{tab:proptest}
\end{sidewaystable}
  
}
\section{Data Example}\label{sec:dataex}
%
The SIR-3 (\emph{S}pread of Nosocomial \emph{I}nfections
and \emph{R}esistant Pathogens) cohort study at the Charit{\'e} University
Hospital in Berlin, Germany, prospectively collected data on the occurrence
and consequences of hopital-aquired infections in intensive care
\citep{beyersmann2006,wolkewitz2008}. A device of particular interest in
critically ill patients is mechanical ventilation. The present data analysis
investigates the impact of ventilation on the length of intensive care unit
stay which is, e.g., of interest in cost containment analyses in hospital
epidemiology \citep{beyersmann2011}. The analysis considers a random subset of 747
patients of the SIR-3 data which one of us has made publicly available
\citep{beyersmann12a}. Patients may either be ventilated
(state~1 as in Figure~\ref{fig:illnessdeath}) or not ventilated (state~0)
upon admission. Switches in device usage are modelled as transitions between
the intermediate states~0 and~1. Patients move into state~2 upon discharge
from the unit. The numbers of observed transitions are reported in the last
row of Table 1. We start by separately considering the two
cumulative end-of-stay hazards~$A_{12}$ and $A_{02}$, followed by a more
formal group comparison as in Remark \ref{rem:WB}(c). Based on the approach suggested by \citet{beyersmann12a}, Section 11.3, we find it reasonable to assume the Markov property.
\begin{figure}

 	 \makebox{ \includegraphics[width=0.9\textwidth]{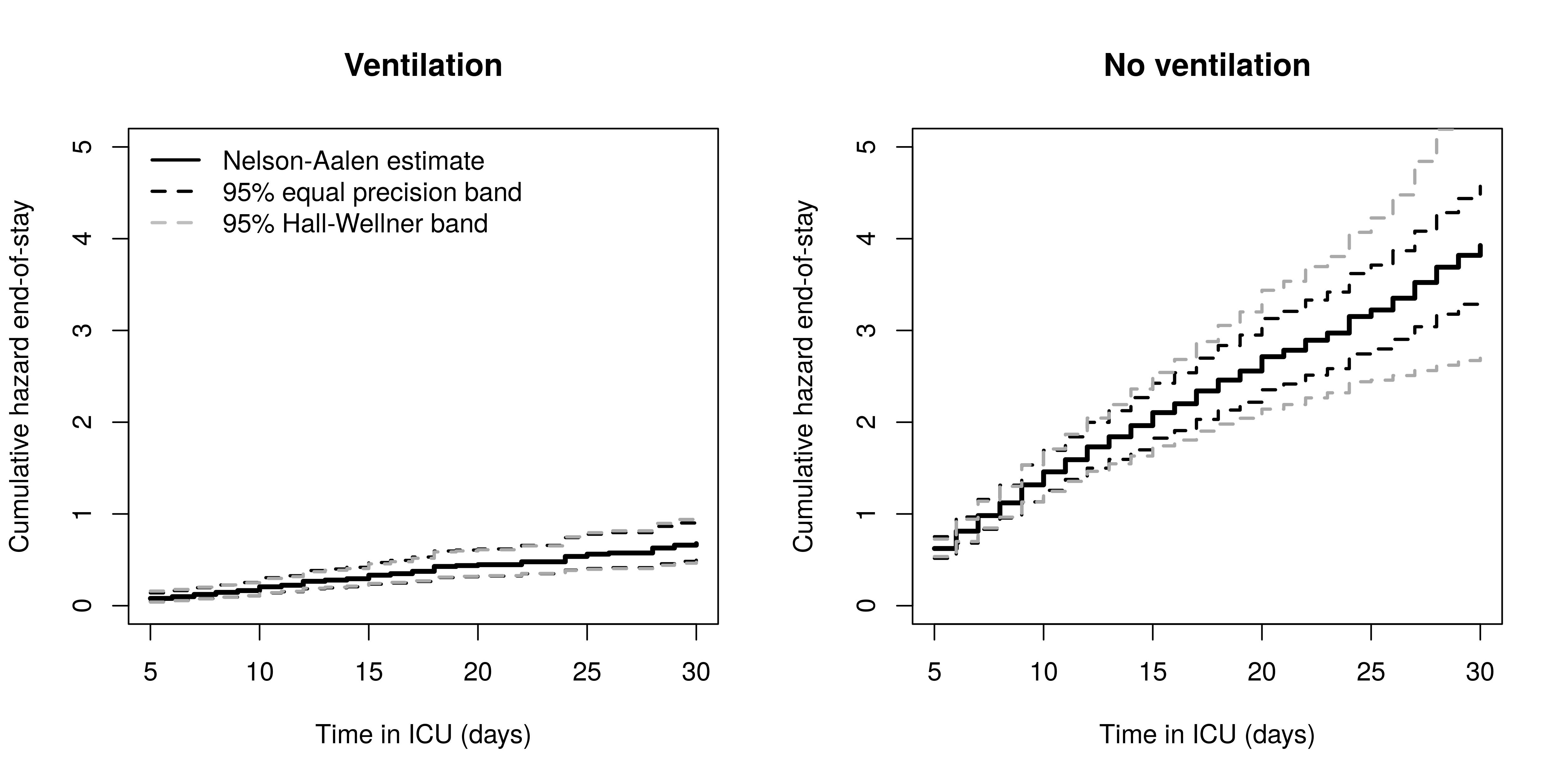}}
	\caption{95\% confidence bands based on standard normal multipliers for the cumulative hazard of end-of-stay from the data example in Section \ref{sec:dataex}. The solid black lines are the Nelson-Aalen estimators separately for `no ventilation' (state 0, right plot) and `ventilation' (state 1, left plot).}
	\label{fig:CBdataex}
\end{figure}
Figure \ref{fig:CBdataex} displays the Nelson-Aalen estimates
of~$A_{12}$ and $A_{02}$ accompanied by simultaneous 95\% confidence bands
utilizing the 1000 wild bootstrap versions with standard normal variates and
restricted to the time interval [5,30] of intensive care unit days. As before, the left-hand tail of the interval is chosen, because Nelson-Aalen estimation regarding $A_{12}$ picks up at $t=5$, cf. the left panel of Figure \ref{fig:CBdataex}. \textcolor{black}{Graphical validation of empirical means and variances of $\hat{\boldsymbol W}_{n}$ showed good compliance compared to the theoretical limit quantities stated in Remark \ref{rm:wc}.} Bands using Poisson variates are similar (both results not shown). 
\begin{figure}
\centering
 	 \makebox{\includegraphics[width=0.9\textwidth]{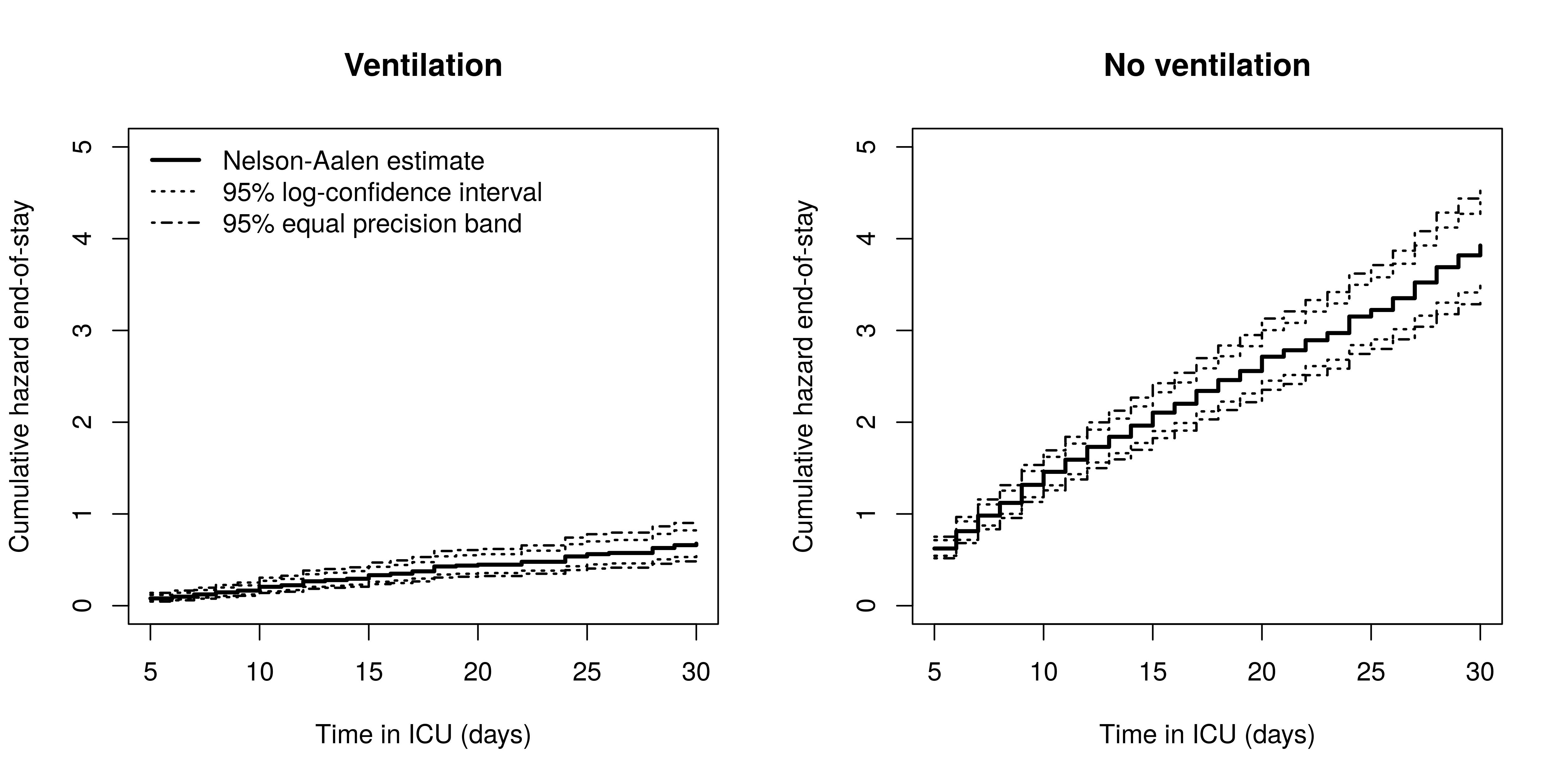}}
	\caption{95\% equal precision confidence bands based on standard normal multipliers and 95\% log-transformed pointwise confidence intervals for the cumulative hazard of end-of-stay from the data example in Section \ref{sec:dataex}. The solid black lines are the Nelson-Aalen estimators separately for `no ventilation' (state 0, right plot) and `ventilation' (state 1, left plot).}
	\label{fig:CBCI}
\end{figure}
Figure \ref{fig:CBCI} also displays the 95\% pointwise confidence intervals based
on a log-transformation. The performance of both equal precision and
Hall-Wellner bands is comparable for transitions out of the ventilation
state. However, the latter tend to be larger for the $0\rightarrow 2$
transitions for later days due to more unstable weights at the right-hand
tail. Equal precision bands are graphically competitive when compared to the
pointwise confidence intervals. Ventilation significantly reduces the hazard
of end-of-stay, since the upper half-space is not contained in the 95\% confidence band of the cumulative hazard difference, see Figure \ref{fig:DiffCB}.


\begin{figure}
\centering
 	 \makebox{\includegraphics[width=0.5\textwidth]{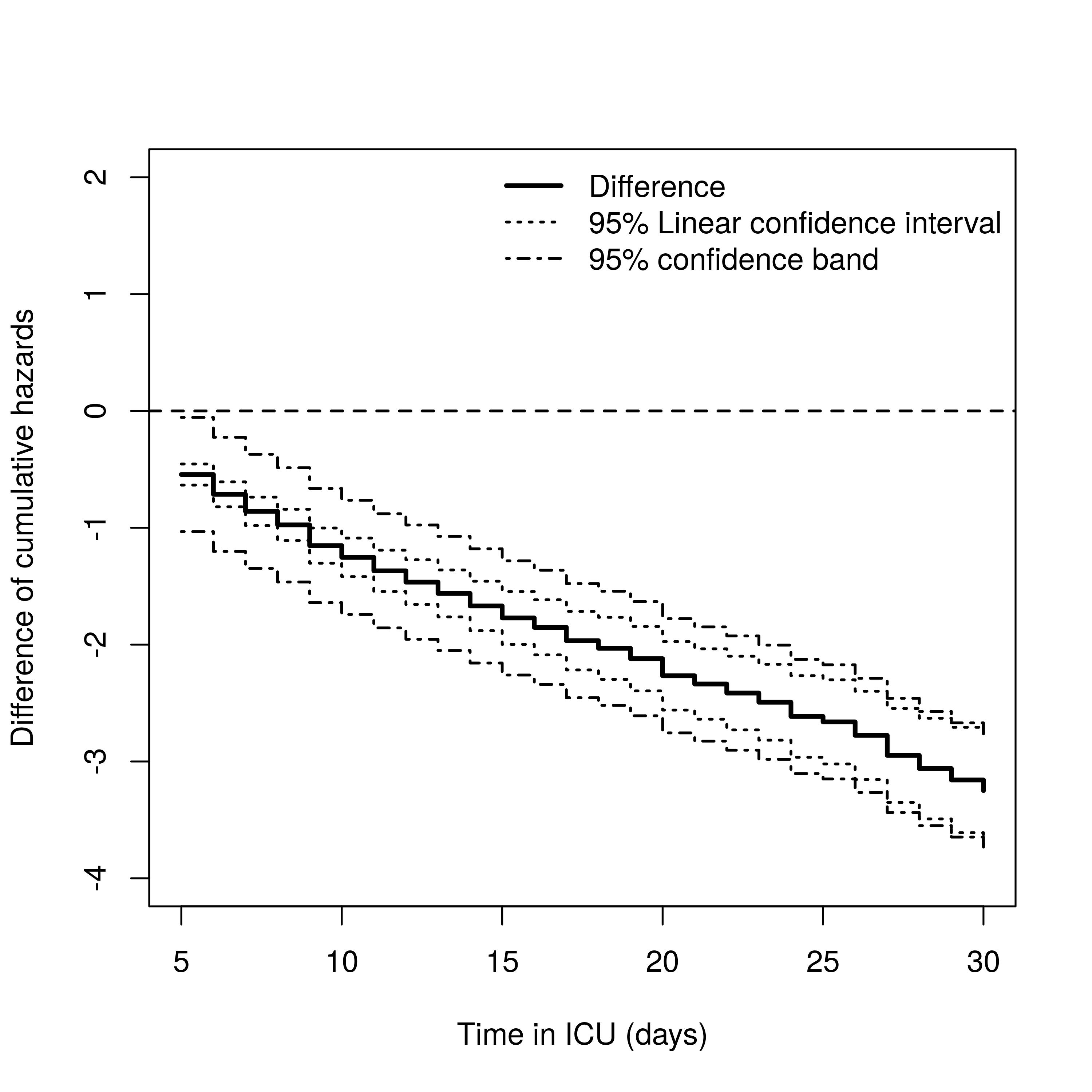}}
	\caption{95\% confidence bands from relation \eqref{eq:diffCB} based on standard normal multipliers and 95\% linear pointwise confidence intervals for difference of the two cumulative hazards of end-of-stay from the data example in Section \ref{sec:dataex}. The solid black lines is the difference `ventilation vs. no ventilation' of the Nelson-Aalen estimators within the two ventilation groups.}
	\label{fig:DiffCB}
\end{figure}
%
\section{Discussion and further Research}
\label{sec:discussion}
%
We have given a rigorous presentation of a weak convergence result for the wild bootstrap methodology for the multivariate Nelson-Aalen estimator in a general setting only assuming Aalen's multiplicative intensity structure of the underlying counting processes. This allowed the construction of time-simultaneous confidence bands and intervals as well as asymptotically valid equivalence and equality tests for cumulative hazard functions. In the context of time-to-event analysis, our general framework is not restricted to the standard survival or competing risks setting, but also covers arbitrary Markovian multistate models with finite state space, other classes of intensity models like relative survival or excess mortality models, and even specific semi-Markov situations. Additionally, independent left-truncation and right-censoring can be incorporated. {\color{black} The procedure has also been used to construct a test for proportional hazards.}
Easy and computationally convenient implementation and within- or two-sample comparisons demonstrate its attractiveness in various practical applications. 

Future work will be on the approximation of the asymptotic distribution corresponding to the matrix of transition probabilities (see \citealp{AalenJoh78}) \textcolor{black}{and functionals thereof in general Markovian multistate models. This is of great practical interest, because no similar Brownian Bridge procedure is available to perform time-simultaneous statistical inference. In particular, previous implications rely on pointwise considerations.} Note that such an approach would significantly simplify the original justifications given by \cite{lin97} and generalizes his idea mainly used in the context of competing risks \citep{scheike03,hyun2009,beyersmann12b}. In addition, we plan to extend the utilized wild bootstrap technique to general semiparametric regression models; 
see \cite{lin00} for an application in the survival context.
{\color{black}
  Current work investigates to which degree the martingale properties presented in this article may be exploited
  to obtain wild bootstrap consistencies for such functionals of Nelson-Aalen estimates
  or for estimators in semiparametric regression models.
  We are confident that the present approach will lead to reliable inference procedures in these contexts
  for which there has been only little research on such general methodology.
}

In contrast to the procedure of \citet{schoenfeld2002} and other recent publications mentioned in the introduction, the more general illness-death model with recovery does not rely on a constant hazards assumption and captures both the time-dependent structure of mechanical ventilation and the competing event `death in ICU'. This significantly improves medical interpretations. The widths of the confidence bands were competitive compared to the pointwise confidence intervals, i.e., demonstrated usefulness in practical situations. Applications of our theory are not restricted to studies investigating mechanical ventilation, but may also be helpful to investigate, for instance, the impact of immunosuppressive therapy in leukemia diagnosed patients \citep[cf.][]{schmoor2013}. {\color{black} The proposed procedure has even been applied in a recent study investigating femoral fracture risk in an elderly population \citep{bluhmki2016}.}

It has to be emphasized that our simulation study suggested that the 
{\color{black}
wild bootstrap approach leads to more powerful procedures (i.e. to narrower confidence bands) compared to the approximation via Brownian bridges. }
As expected, the applied log-transformation results in improved small sample properties compared to the untransformed {\color{black}wild bootstrap} bands. 
Based on the current simulation study, however, it was difficult to clearly recommend which type of band and which type of multiplier should be used. 

\appendix

\section{Proofs}\label{app}

{\color{black}
\begin{proof}[Proof of Lemma~\ref{lem:mart}]
 Due to similarity, it is enough to concentrate at the first component only; thus, we subsequently suppress the subscript `1'.
 The independence of all white noise processes immediately imply the orthogonality of all component processes.
 At first, we verify the martingale property; the square-integra\-bility is obviously fulfilled since $E(G_1^2(0)) < \infty$.
 To this end, let $0 \leq s \leq t$. 
 By measurability of the counting and the predictable process with respect to $\mathcal{C}_0$, we have
 \begin{align*}
  E(\hat W_n(t) \ | \ \mathcal{C}_s ) = & \sqrt{n} \int_{(0,t]} \frac{J(u)}{Y(u)} E( G(u) \ | \ \mathcal{C}_s) \ d N(u) \\
   = & \sqrt{n} \int_{(0,s]} \frac{J(u)}{Y(u)} G(u) \ d N(u) + \sqrt{n} \int_{(s,t]} \frac{J(u)}{Y(u)} E(G(u)) \ d N(u) = \hat W_n(s)
 \end{align*}
 by the independence of $\sigma(G(u))$ and $\mathcal C_s$ for all $u > s$.
 Hence, the martingale property is shown.
 
 The predictable variation process $\langle \hat W_{n} \rangle$ is the compensator of $\hat W_n^2$, i.e. we calculate
 \begin{align*}
  & E(\hat W_n^2(t) \ | \ \mathcal{C}_s)  \\
  & = n \Big( \int_{(0,s]} \int_{(0,s]} + \int_{(s,t]} \int_{(0,s]} + \int_{(0,s]} \int_{(s,t]} + \int_{(s,t]} \int_{(s,t]} \Big) E(G(u) G(v) \ | \ \mathcal{C}_s ) \\
  & \quad \times \frac{J(u) J(v)}{Y(u) Y(v)}\ d N(u) \ d N(v) \\
  & = n \Big( \int_{(0,s]} \int_{(0,s]} G(u) G(v) + \int_{(s,t]} \int_{(0,s]} E(G(u)) G(v) \\
  & \quad + \int_{(0,s]} \int_{(s,t]} G(u) E(G(v)) + \int_{(s,t]} \int_{(s,t]} E(G(u) G(v)) \Big) \frac{J(u) J(v)}{Y(u) Y(v)} \ d N(u) \ d N(v) \\
  & = n \Big( \int_{(0,s]} \int_{(0,s]} G(u) G(v) \frac{J(u) J(v)}{Y(u) Y(v)} \ dN(u) d N(v) + \int_{(s,t]} E(G^2(u)) \frac{J(u)}{Y^2(u)} \ d N(u) \Big) \\
  & = \hat W_n^2(s) + n \int_{(0,t]} \frac{J(u)}{Y^2(u)} \ d N(u) - n \int_{(0,s]} \frac{J(u)}{Y^2(u)} \ d N(u),
 \end{align*}
 again by the $\mathcal  C_s$-measurability of $G(u)$ for $u \leq s$ and their independence for $u > s$.
 The second to last equality is due to the independence of $G(u)$ and $G(v)$ for $u \neq v$.
 Hence, $(\hat W_n^2(t) - n \int_{(0,t]} \frac{J(u)}{Y^2(u)} \ d N(u))_{t \in [0,\tau]}$ is a martingale.
 
 Letting $\Delta f$ denote the jump-size process fo a c\`adl\`ag function $f$, the definition of the optional variation process yields
 \begin{align*}
  [\hat W_n ] (t)  = \sum_{0 < s \leq t} (\Delta \hat W_n(s) )^2 = n \sum_{0 < s \leq t} G^2(u) \frac{J(u)}{Y^2(u)} \Delta N(u) 
  = n \int_{(0,t]} G^2(u) \frac{J(u)}{Y^2(u)} \ d N(u),
 \end{align*}
 where the sum is taken over all jump points of $N$.
\end{proof}

\begin{proof}[Proof of Theorem~\ref{Th.Na.uni-What}]
 It is enough to verify the conditions of Rebolledo's martingale central limit theorem (in conditional probability); 
 see e.g. Theorem~II.5.1 in \cite{abgk93}.
 Since the filtration $\mathcal C_0$ at time $s=0$ is not trivial, the resulting weak convergence will hold given $\mathcal C_0$ as well, in probability.
 From the classical theory we know that the Aalen-type variance estimator, which is in fact the predictable variation process of $\hat W_n$,
 is uniformly consistent for the variance function.
 
 It remains to prove the Lindeberg condition (2.5.3) on page 83 in \cite{abgk93}.
 But, by the same arguments as in the proof of Lemma~\ref{lem:mart}, this is exactly the same as the Lindeberg condition for the Nelson-Aalen estimator itself.
 And this holds due to the main assumption~\eqref{eq:NA}.
 
 Hence, Rebolledo's martingale central limit theorem yields the desired weak convergence as well as the uniform consistency of the optional variation process.
\end{proof}

}

\begin{proof}[Proof of Theorem~\ref{thm:delta_meth_conv}]
 For convergence~\eqref{eq:weak_conv_logA.1}, see Section~IV.1 in \cite{abgk93} in combination with Slutsky's theorem.
 Convergence~\eqref{eq:weak_conv_logA.2} follows from the consistency of $\sigma^{*2}$, Slutsky's theorem and Theorem~\ref{Th.Na.uni-What}, since
 $\hat W_n$ asymptotically mimicks the distribution of $\sqrt n ( \hat A_n - A)$.
 The functional delta-method for $(x \mapsto \log x)$ completes the proof.
\end{proof}

\begin{proof}[Proof of Corollaries~\ref{cor:CBs} and ~\ref{cor:ks_test}]
 Due to the continuous limit distribution the conditional quantiles converge as well in probability;
 see e.g. \cite{janssen03}, Lemma~1.
 The consistency of $\varphi_n^{KS}$ under $K_{\neq}$ follows from the convergence in probability of the conditional quantile towards a finite value and from the uniform consistency of the multivariate Nelson-Aalen estimator for the cumulative hazard functions.
 Since the factor $\sqrt{n}$ tends to infinity, the test statistic also goes to infinity in probability under $K_{\neq}$.
\end{proof}

\begin{proof}[Proof of Corollary~\ref{cor:equivalence}]
 The proof extends the arguments of \cite{wellek10}, Section 3.1,
 from confidence intervals to confidence bands.
 Write $H = H_1 \cup H_2$ where  
 \begin{align*}
  & H_1 : \{ A(s) \leq A_0(s) - \ell(s) \text{ for some } s \in [t_1,t_2] \} \\
  \text{and} \quad & H_2 : \{ A(s) \geq A_0(s) + u(s) \text{ for some } s \in [t_1,t_2] \}.
 \end{align*}
 Suppose $H$ is true and let without loss of generality be $H_1$ true due to analogy.
 Then the probability of a false rejection of $H$ amounts to
 \begin{align*}
  & P( A_0(s) - \ell(s) < a_n(s) \text{ and } b_n(s) < A_0(s) + u(s) \text{ for all } s \in [t_1,t_2] ) \\
  & \leq P( A_0(s) - \ell(s) < a_n(s) \text{ for all } s \in [t_1,t_2] ) \\
  & \leq P( A(s) < a_n(s) \text{ for some } s \in [t_1,t_2] ) \longrightarrow \alpha.
 \end{align*}
 Here the last inequality holds since $H_1$ is true 
 and the convergence is due to the asymptotic coverage probability of the confidence band $(a_n(s), \infty)_{s \in [t_1,t_2]}$.
 
 In order to prove consistency, suppose the alternative hypothesis $K$ is true and choose any $\varepsilon$ such that 
 $$ 0 < \varepsilon < \inf_{s \in [t_1,t_2]} -(A_0(s) - \ell(s) - A(s)) \wedge (A_0(s) + u(s) - A(s)). $$
 Thus, by the (uniform) consistency of the Nelson-Aalen estimator and the wild bootstrap quantiles,
 the probability of a correct rejection of $H$ equals
 \begin{align*}
  & P( A_0(s) - \ell(s) < a_n(s) \text{ and } b_n(s) < A_0(s) + u(s) \text{ for all } s \in [t_1,t_2] ) \\
  & \geq P(A(s) - \varepsilon < a_n(s) \text{ and } b_n(s) < A(s) + \varepsilon \text{ for all } s \in [t_1,t_2]) \longrightarrow 1
 \end{align*}
 as $n \rightarrow \infty$.
 For the convergence in the previous display, also note that $a_n \xrightarrow{\text{ }P\text{ }} A$ as well as $b_n \xrightarrow{\text{ }P\text{ }} A$ uniformly in $[t_1,t_2]$.
\end{proof}

{\color{black}

\begin{proof}[Proof of Theorem~\ref{thm:prop}]
 
 Let $t_0 > 0 $.
 Denote by $\mathfrak{D}_{>0}[t_0,\tau] \subset \mathfrak{D}[t_0,\tau]$ the cone of positive c\`adl\`ag functions that are bounded away from zero.
 It is easy to see that the functional $\phi: \mathfrak{D}^2_{>0}[t_0,\tau] \rightarrow \mathfrak{D}_{>0}[t_0,\tau], \ (f,g) \mapsto \frac{f}{g}$ is Hadamard-differentiable
 tangentially to the set of pairs of continuous functions $\mathfrak C^2[t_0,\tau]$
 with continuous and linear Hadamard-derivative
 $$ \phi'_{(f,g)}:  \mathfrak C^2[t_0,\tau] \rightarrow \mathfrak C[t_0,\tau], \quad (h_1,h_2) \longmapsto \frac{h_1}{g} - h_2 \frac{f}{g^2}. $$
 A simpler Hadamard-differentiability result holds for $\phi$'s restriction to $\tau$,
 i.e. $\phi|_{\tau}: (0,\infty)^2 \ni (f(\tau), g(\tau)) \mapsto \frac{f(\tau)}{g(\tau)}$
 with continuous, linear Hadamard-derivative 
 $${(\phi|_\tau)}'_{(f,g)}:  \R^2 \rightarrow \R, \quad  (h_1(\tau),h_2(\tau)) \longmapsto \frac{h_1(\tau)}{g(\tau)} - h_2(\tau) \frac{f(\tau)}{g^2(\tau)}.$$
 
 Hence, we apply the functional $\delta$-method and the continuous mapping theorem to
 $$\sqrt{\frac{n_1 n_2}{n}} (\phi(\hat A_{n_2}^{(2)}, \hat A_{n_1}^{(1)}) - \phi(A^{(2)}, A^{(1)})) 
 \quad \text{and} \quad 
 \phi'_{(\hat A_{n_2}^{(2)}, \hat A_{n_1}^{(1)})} \Big( \sqrt{\frac{n_1}{n}} \hat W_{n_2}^{(2)}, \sqrt{\frac{n_2}{n}} \hat W_{n_1}^{(1)} \Big), $$
  respectively,
 verifying their equality in distribution in the limit (conditionally in probability for the latter).
 Proceed similarly with the restricted functional $\phi|_{\tau}$.
 Furthermore, the difference functional of both above functionals retains the Hadamard-differentiability tangentially 
 to the set of pairs of continuous functions.
 Our specific choices of the distance $\rho$ are continuous functionals, hence we are able to apply the continuous mapping theorem again.
 To conclude the proof of the asymptotic behaviour of $\varphi_{n_1,n_2}^{\textnormal{prop}}$ under $H_0^{\textnormal{prop}}$,
 note that the particular weight function solves the problem of dividing by zero at $t_0 = 0$.
 
 For the asymptotic power assertion, let $t_1 \in [0,\tau]$ at which $H_0^{\textnormal{prop}}$ is violated.
 Then
$$ \rho \Big( \ \frac{\hat A^{(2)}_{n_2}}{\hat A^{(1)}_{n_1}} \ , \ \frac{\hat A^{(2)}_{n_2}(\tau)}{\hat A^{(1)}_{n_1}(\tau)} \ \Big)$$
 converges in probability to a positive value, whence $T_{n_1,n_2} \stackrel{p}{\rightarrow} \infty$ follows.
 The conditional quantiles, however, still converge to a finite constant in probability by the above arguments.
 
\end{proof}

}

\section{Supplementary Material: Alternative Proof of Theorem~\ref{Th.Na.uni-What}}

Before proving the conditional convergence in distribution stated in Theorem~\ref{Th.Na.uni-What},
we extend the conditional central limit theorem (CCLT) A.1 given \cite{beyersmann12b} to our context. For that purpose, consider $N(\tau)=\sum_{j=1}^k N_{j}(\tau)$ as the random number of totally observed jumps  in $[0,\tau]$.
Due to the general framework only assuming Aalen's multiplicative intensity model, random sums with a random number $N$ of summands occur and need to be analyzed, since each jump of the counting processes requires its own multiplier $G_{j}(u)$ in the resampling scheme. Thus, we state a more general CCLT as given in \cite{beyersmann12b}, where $\| \cdot \|$ denotes the Euclidean norm on $\R^p$, $p \in \N$.
$\mathfrak L$ again denotes the law.

Throughout, the resampled quantities are modelled via projection on a product probability space
$(\Omega_1 \times \Omega_2, \mathcal{A}_1 \otimes \mathcal{A}_2, P_1 \otimes P_2)$, 
where the white noise processes only depend on the second and the data only on the first coordinate.
\begin{theorem}\label{th:wcext}
 Let $\b{\textit{Z}}_{n;l} : (\Omega_1, \mathcal{A}_1, P_1) \rightarrow (\R^p, \mathcal{B}^p), l = 1,\dots, N,$
be a triangular array of $\R^p$ random variables, $ p\in\N$, where $N: (\Omega_1, \mathcal{A}_1, P_1) \rightarrow (\N_0, \mathcal{P}(\N_0))$ is an integer-valued random variable, 
non-decreasing in $n$, such that $N \stackrel{P}{\rightarrow} \infty$ as $n \rightarrow \infty$. 
Let $G_{n;l}: (\Omega_2, \mathcal{A}_2, P_2)\rightarrow (\R, \mathcal{B}), l \in \N,$ be rowwise i.i.d. random variables 
with $E(G_{n;1}) = 0$ and $var(G_{n;1})=1$. 
Modelled on the product space
$(\Omega_1 \times \Omega_2, \mathcal{A}_1 \otimes \mathcal{A}_2, P_1 \otimes P_2)$,
the arrays $(N, \b Z_{n;l}: l\le N)$ and $(G_{n;l})_{l \in \N}$ are independent. 
Suppose that $\b Z_{n;l}$ fulfills the convergences
\begin{eqnarray}
 \max_{1 \leq l \leq N} \| \b Z_{n;l} \| \stackrel{P}{\longrightarrow} 0 \label{eq:thm_cclt_1} \\
 \sum_{l=1}^N \b Z_{n;l} \b Z_{n;l}' \stackrel{P}{\longrightarrow} \boldsymbol{\Gamma}, \label{eq:thm_cclt_2}
\end{eqnarray}
where $\bs \Gamma$ is a positive definite covariance matrix.
Then, conditionally given $(N, \b Z_{n;l}: l\leq N)$, the following weak convergence holds in probability:
\begin{eqnarray}
 \mathcal{L} \Big( \sum_{l=1}^N G_{n;l} \b Z_{n;l} \ \Big| \ N, \b Z_{n;l}: l\leq N \Big) \stackrel{d}{\longrightarrow} N(\bs{0}, \bs \Gamma) \label{eq:thm_cclt_3}.
\end{eqnarray}
\end{theorem}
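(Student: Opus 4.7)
The plan is to combine the subsequence principle for convergence in probability with a pointwise (i.e.\ conditional on realizations of $(N,\b Z_{n;l})$) application of the classical Lindeberg--Feller central limit theorem via the Cram\'er--Wold device.

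First I would reduce the multivariate statement to a scalar one. By the Cram\'er--Wold device applied on the product space, it suffices to show, for each fixed $\lambda\in\R^p$, that
$$ S_n(\lambda) \ := \ \sum_{l=1}^N G_{n;l}\,\lambda'\b Z_{n;l} \ \stackrel{d}{\longrightarrow} \ N(0,\lambda'\bs\Gamma\lambda) $$
conditionally given $\mathcal G_n:=\sigma(N,\b Z_{n;l}:l\leq N)$, in probability. Now invoke the subsequence principle: it is enough to show that every subsequence $(n_k)\subset\N$ contains a further subsequence $(n_{k_j})$ along which the conditional laws $\mathfrak L(S_{n_{k_j}}(\lambda)\mid\mathcal G_{n_{k_j}})$ converge weakly to $N(0,\lambda'\bs\Gamma\lambda)$ almost surely. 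Since conditions \eqref{eq:thm_cclt_1} and \eqref{eq:thm_cclt_2} hold in $P_1$-probability (and $N\stackrel{P}{\to}\infty$), we may pass along an almost sure subsequence; all three convergences then hold simultaneously off a $P_1$-null set.

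Fix $\omega_1\in\Omega_1$ in the set of full $P_1$-measure on which (A.1), (A.2) and $N(\omega_1)\to\infty$ hold. Under $P_2$, the random variables $G_{n;l}\lambda'\b Z_{n;l}(\omega_1)$, $l=1,\dots,N(\omega_1)$, are independent with mean zero and variances $(\lambda'\b Z_{n;l}(\omega_1))^2$. By (A.2),
$$ s_n^2(\omega_1)\ :=\ \sum_{l=1}^{N(\omega_1)}(\lambda'\b Z_{n;l}(\omega_1))^2\ =\ \lambda'\bigl(\textstyle\sum_l \b Z_{n;l}\b Z_{n;l}'\bigr)(\omega_1)\,\lambda\ \longrightarrow\ \lambda'\bs\Gamma\lambda. $$
It remains to verify the conditional Lindeberg condition: for every $\varepsilon>0$,
$$ L_n(\omega_1,\varepsilon)\ :=\ \sum_{l=1}^{N(\omega_1)}(\lambda'\b Z_{n;l}(\omega_1))^2\,E\bigl[G_{n;1}^2\,\mathbf 1\{|G_{n;1}\lambda'\b Z_{n;l}(\omega_1)|>\varepsilon\}\bigr]\ \longrightarrow\ 0. $$
Here I use that the $G_{n;l}$ are rowwise i.i.d.\ and independent of the $Z$'s. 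Estimating $|\lambda'\b Z_{n;l}(\omega_1)|\leq\|\lambda\|M_n(\omega_1)$ with $M_n(\omega_1):=\max_{l\leq N(\omega_1)}\|\b Z_{n;l}(\omega_1)\|$, one gets
$$ L_n(\omega_1,\varepsilon)\ \leq\ s_n^2(\omega_1)\,E\bigl[G_{n;1}^2\,\mathbf 1\{|G_{n;1}|>\varepsilon/(\|\lambda\|M_n(\omega_1))\}\bigr]. $$
By (A.1), $M_n(\omega_1)\to 0$, and by the standing assumption that $G_{n;1}$ has unit variance (together with the implicit uniform integrability of $\{G_{n;1}^2\}_{n\in\N}$ coming from identical distribution of the white noise evaluated at distinct times), the expectation above tends to zero. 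Combined with boundedness of $s_n^2(\omega_1)$, we obtain $L_n(\omega_1,\varepsilon)\to 0$. The Lindeberg--Feller theorem now yields $S_n(\lambda)\stackrel{d}{\to}N(0,\lambda'\bs\Gamma\lambda)$ under $P_2$ for this $\omega_1$, i.e.\ the conditional convergence holds almost surely along the subsequence. Cram\'er--Wold and the subsequence principle then give \eqref{eq:thm_cclt_3}.

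The main technical obstacle, I expect, is the careful handling of the product-space framework so that the conditional Lindeberg computation is meaningful: one must ensure that, on the almost sure subsequence, the random pair $(N(\omega_1),\b Z_{n;l}(\omega_1))$ is frozen so that the remaining randomness is driven solely by the i.i.d.\ multipliers on $(\Omega_2,\mathcal A_2,P_2)$. Closely tied to this is the uniform-integrability-type control on $G_{n;1}^2$ needed to conclude $L_n\to 0$; for the standard choices of wild bootstrap weights used in the paper (e.g.\ standard normal or centered Poisson), this is immediate as the distribution does not depend on $n$, but in the abstract statement one should either assume identical distributions across $n$ or explicitly require uniform integrability of $\{G_{n;1}^2\}_{n}$.
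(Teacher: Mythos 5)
Your proposal is correct, and it takes a genuinely different route from the paper's own argument. The paper's proof is essentially a wrapper: it first observes that, since $N$ is non-decreasing and $N\stackrel{P}{\to}\infty$, the divergence actually holds $P_1$-a.s., so for $P_1$-almost every fixed $\omega_1$ the number of summands $N(\omega_1)$ is a deterministic sequence tending to infinity. It then uses the subsequence principle to pass to a subsequence along which \eqref{eq:thm_cclt_1} and \eqref{eq:thm_cclt_2} hold almost surely, and simply \emph{cites} the conditional CLT A.1 of \cite{beyersmann12b} (which handles a deterministic number of summands) to conclude, returning to convergence in probability by another subsequence argument. You instead re-derive that inner CLT from scratch: Cram\'er--Wold to reduce to scalars, then a verification of the Lindeberg condition for the array $G_{n;l}\lambda'\b Z_{n;l}(\omega_1)$ under $P_2$, bounding the Lindeberg sum by $s_n^2(\omega_1)\, E\bigl[G_{n;1}^2\mathbf 1\{|G_{n;1}|>\varepsilon/(\|\lambda\|M_n(\omega_1))\}\bigr]$ and using $M_n(\omega_1)\to 0$ from \eqref{eq:thm_cclt_1}. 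What your version buys is transparency: one sees explicitly that \eqref{eq:thm_cclt_1} is a negligibility/Lindeberg-type condition and \eqref{eq:thm_cclt_2} a variance-convergence condition, rather than deferring that to the cited reference. The cost is brevity, and also that you have to flag the uniform-integrability point yourself.

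On that last point you are right to be careful: as abstractly stated, the theorem permits the distribution of $G_{n;1}$ to change with $n$, subject only to mean zero and unit variance, and this alone does \emph{not} imply $E\bigl[G_{n;1}^2\mathbf 1\{|G_{n;1}|>c_n\}\bigr]\to 0$ as $c_n\to\infty$. The paper avoids confronting this because it outsources the step to \cite{beyersmann12b}, whose CCLT carries the requisite hypotheses; and in the actual application the white noise process $G$ has an $n$-free marginal, so $G_{n;1}^2$ is a single integrable random variable and the tail expectation trivially vanishes. Your suggestion — either assume the multiplier law is fixed across $n$, or explicitly impose uniform integrability of $\{G_{n;1}^2\}_n$ — is exactly the right fix for the abstract statement. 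One further small remark: when you invoke Cram\'er--Wold ``first'' and the subsequence principle ``afterwards'' for each fixed $\lambda$, it is cleanest to either diagonalize over a countable dense set of directions or to apply the subsequence principle once to \eqref{eq:thm_cclt_1}--\eqref{eq:thm_cclt_2} (as the paper does) and then run Cram\'er--Wold for all $\lambda$ simultaneously on the fixed a.s.\ subsequence; otherwise the ``for every $\lambda$'' quantifier and the subsequence extraction do not interchange automatically. This is a presentational rather than substantive gap.
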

\begin{proof}
Since $N$ is non-decreasing in $n$ with $N \stackrel{P}{\rightarrow} \infty$, it follows that $N(\omega_1, \omega_2) \rightarrow \infty$ 
for $P_1$-almost all $\omega_1\in\Omega_1$, independently of the value $\omega_2 \in \Omega_2$. 
Thus, for $P_1$-almost all such fixed $\omega_1\in\Omega_1$, 
we have a deterministic number of summands $N(\omega_1,\cdot)$. 
By the subsequence principle, choose a subsequence $(n') \subseteq (n) = \mathbb N$ along which \eqref{eq:thm_cclt_1} and \eqref{eq:thm_cclt_2} hold for almost every $\omega_1 \in \Omega_1$ as well.
Applying the CCLT A.1 in \cite{beyersmann12b} with its conditions being almost surely fulfilled, 
the weak convergence \eqref{eq:thm_cclt_3} follows $P_1$-almost surely along $n'$. 
A further application of the subsequence principle, going back to convergence in probability, completes the proof.
\end{proof}

\begin{proof}[Proof of Theorem~\ref{Th.Na.uni-What}]
{\it Conditional Finite-Dimensional Convergence of $\hat{\b W}_n$}.\\
Due to asymptotic mutual independence, we only consider the first entry $\hat W_{1n}$ of $\hat{\textbf{\textit{W}}}_{n}$
and suppress the subscript `1' subsequently.
Define countably many i.i.d. random variables $ \tilde G_{n;1}, \tilde G_{n;2},\ldots$ with $E(\tilde G_{n;1})=0$ and $var(\tilde G_{n;1})=1$,
that are independent of $\mathcal C_n$,
and define processes $ Z_{n;1},\ldots,Z_{n;N(\tau)}$ such that equation (\ref{NA.uni.Wnh}) is re-expressed as
\begin{align}
\hat{W}_n(t) =\sum_{v\in T}G(v) \sqrt{n} X_v(t) \stackrel{d}{=} \sum\limits_{l=1}^{N(\tau)} \tilde G_{n;l}Z_{n;l}(t), \label{eq:triarray}
\end{align}
where $\stackrel{d}{=}$ denotes equality in distribution.
Here $X_v(t):= \boldsymbol{1} \{ v \leq t \} \Delta N(v) / Y(v) $
and $T=\lbrace u\in[0,\tau] \ \vert \ \Delta N(u)=1\rbrace$ contains all jump times of the counting process $N$.
Then, the general framework of Theorem \ref{th:wcext} is fulfilled for the triangular array $Z_{n;l}(t_j),$ $l=1,\ldots,N(\tau),$ \ $j=1,\dots, r$,
for any finite subset $\{t_1,\dots,t_r\} \subset [0,\tau]$.
Next, conditions~\eqref{eq:thm_cclt_1} and~\eqref{eq:thm_cclt_2} are verified in a similar manner as in \cite{beyersmann12b}.
Applying the subsequence principle for convergence in probability to assumption (\ref{Pre.ass11}), it follows that for every subsequence there exists a further subsequence, say $n$, such that as $n\rightarrow\infty$ 
\begin{align}
\label{eq:YbynAS}
\underset{u\in[0,\tau]}{\sup}\Big| \frac{Y(u)}{n}-y(u)\Big|\xrightarrow{\text{ }a.s.\text{ }}0,
\end{align}
i.e., the left-hand side converges to zero for $P_1$-almost all $\omega\in\Omega_1$.
Fix an arbitrarily small $\epsilon>0$ and an $\omega$ for which \eqref{eq:YbynAS} holds.
The following arguments implicitly consider all $n \geq n_0(\omega,\epsilon)$ for an $n_0$ determined by~\eqref{eq:YbynAS}.
Hence, the left-hand side of~\eqref{eq:YbynAS} is less than $\epsilon$ for all such $n \geq n_0$.
Choose a $\gamma_\epsilon = \gamma_{\epsilon}(\omega) >0$  such that
\begin{align*}
 \underset{{u\in[0,\tau]}}\sup \frac{n}{Y(\omega,u)} \leq \frac{\gamma_{\epsilon}}{y(u)} \leq \frac{\gamma_{\epsilon}}{\underset{{v\in[0,\tau]}}\inf y(v)}=:c_{\epsilon}.
\end{align*}
Since $X_v$ is (at most) a one-jump process on $[0,\tau]$, we have
\begin{align*}
\underset{{l = 1, \dots, N(\tau)}}\sup \ \underset{{t\in[0,\tau]}}\sup | Z_{n;l}(t) |
& \leq \sqrt{n}  \ \underset{{v \in T}}\sup X_{v}(\omega,\tau)
\leq n^{-1/2} \frac{n}{Y(\omega, \tau)} \leq n^{-1/2} {c_\epsilon}\xrightarrow{n\rightarrow \infty} 0 .
\end{align*}
In particular, $\textbf{\textit Z}_{n;l}=( Z_{n;l}(t_1),\ldots,Z_{n;l}(t_k))'$ satisfies 
$ \underset{1 \leq l \leq N(\tau)}\max \| \textbf{\textit Z}_{n;l}(t) \| \stackrel{P}{\longrightarrow} 0 $,
and \eqref{eq:thm_cclt_1} holds.

For simplicity, condition~\eqref{eq:thm_cclt_2} is only shown for two time points $0\le t_1\le t_2\le \tau$,
such that $\textbf{\textit{Z}}_{n;l}=(Z_{n;l}(t_1),Z_{n;l}(t_2))'$. 
Representation (\ref{eq:triarray}) implies that 
\begin{align*}
\sum\limits_{l=1}^{N(\tau)} \textbf{\textit Z}_{n;l} \textbf{\textit Z}_{n;l}'=n \sum_{v\in T} \begin{pmatrix}
 X_v^2(t_1) &X_v(t_1) X_v(t_2)\\
 X_v(t_1) X_v(t_2) & X_v^2(t_2)
\end{pmatrix}.
\end{align*}
The off-diagonals equal $X_v(t_1) X_v(t_2)= \boldsymbol{1} \{ v \leq t_1 \} \Delta N(v) / Y^2(v)$
and the other two components are obtained for $t_1 = t_2$.
Using the Doob-Meyer decomposition~\eqref{eq:doobmeyer}, it follows that
\begin{align*}
n \sum_{v\in T}X_v(t_1) X_v(t_2)
= n^{-1} \underset{(0,t_1]}{\int{}} \Big( \frac{n}{Y(u)} \Big)^2 d M (u)
  + \underset{(0,t_1]}{\int{}} \frac{n}{Y(u)} \alpha(u) d u .
\end{align*}
As in \cite{beyersmann12b}, Rebolledo's martingale central limit theorem (\citealp{abgk93}, Theorem~II.5.1)
shows the negligibility of the martingale integral.
The remaining integral converges to $\psi(t_1,t_2) = \int_{(0,t_1]} \alpha(u) / y(u) d u$ in probability due to assumption~\eqref{eq:YbynAS}.
Consequently, we conclude that, as $n\rightarrow \infty$,
\begin{align*}
\sum\limits_{l=1}^{N(\tau)} \textbf{\textit Z}_{n;l}\textbf{\textit Z}_{n;l}'\xrightarrow{\text{ }P\text{ }}
\begin{pmatrix}
 \psi(t_1,t_1) & \psi(t_1,t_2) \\ \psi(t_1,t_2) & \psi(t_2,t_2)
\end{pmatrix}.
\end{align*}

Let $U$ be a zero-mean Gaussian process with covariance function $\psi$. Extending previous arguments to $r\in\mathbb N$ time points $t_1,\ldots,t_r$, 
Theorem \ref{th:wcext} implies, conditionally on $\mathcal{C}_n$, the finite-dimensional weak convergence 
\begin{align*}
(\hat{W}_n(t_1), \dots, \hat{W}_n(t_r))' \stackrel{d}{\longrightarrow} (U(t_1), \dots, U(t_r))'
\end{align*}
in probability.
Conditionally on $\mathcal{C}_n$, only the white noise processes $G_{1}, \dots, G_k$ in \eqref{NA.uni.Wnh} are random and, in particular, stochastically independent. 
This implies the multivariate conditional weak convergence
\begin{align*}
 (\hat{\textbf{\textit W}}_n(t_1), \dots, \hat{\textbf{\textit W}}_n(t_r))' \stackrel{d}{\longrightarrow} (\textbf{\textit U}(t_1), \dots, \textbf{\textit U}(t_r))'
\quad \text{in probability},
\end{align*}
where $\textbf{\textit U} = (U_1, \dots, U_k)'$ has independent components and the asserted covariance structure.
\vspace{0.5cm}

The {\it conditional tightness of $\hat{\b W}_n$}
 follows similarly as in the proof of Theorem~3.1 in \cite{dobler14}.
 As previously, tightness of $\hat{\b{\textit{W}}}_n$ is separately studied for each single component, i.e., we only consider $\hat W_{jn}$ 
 and suppress the subscript `$j$' of the estimators and counting processes as above.
 Let $0\leq r \leq s \leq t \leq \tau$. Then, Theorem~15.6 in \cite{billingsley68} using $\gamma = 2$ and $\alpha = 1$ in combination with the remark on p. 356 in \cite{jacod03} leads us to the following conditional expectation: 
 \begin{align*}
  & E[(\hat{W}_n(t) - \hat{W}_n(s))^2(\hat{W}_n(s) - \hat{W}_n(r))^2 \ \left| \right. \ {\color{purple}\mathcal C_0} ] \\
  & = n^2 E\Big[ \Big( \int\limits_{(s,t]} G(u) \frac{J(u)}{Y(u)} d N(u) \Big)^2 
    \Big( \int\limits_{(r,s]} G(v) \frac{J(v)}{Y(v)} d N(v) \Big)^2 \ \Big| \ {\color{purple}\mathcal C_0} \Big]\\
  & = n^2  \int\limits_{(s,t]} \int\limits_{(s,t]} \int\limits_{(r,s]} \int\limits_{(r,s]}
    \frac{J(u_1)}{Y(u_1)}  \frac{J(u_2)}{Y(u_2)} \frac{J(v_1)}{Y(v_1)} \frac{J(v_2)}{Y(v_2)} \\
    & \quad \times E[ G(u_1) G(u_2) G(v_1) G(v_2) ]
    d N(v_2) d N(v_1) d N(u_2) d N(u_1).
 \end{align*}
  Since the multipliers $G(u), \ u \in T $ 
 are independent and the intervals $(r,s]$ and $(s,t]$ are disjoint, 
 the remaining expectation decomposes into a product of  $E[G(u)]$ or $E[G^2(u)]$.
 Here, each expectation of a multiplier to the power of one vanishes due to $E[G(u)] = 0$
 and a multiplier to the power of two only occurs
 whenever $u_1 = u_2 \in T$ or $v_1 = v_2 \in T$.
 Since $E[G^2(u)] = 1$, the above display simplifies to
 \begin{align*}
   n^2 \int\limits_{(s,t]} \frac{J(u)}{Y(u)^2} d N(u)
    \int\limits_{(r,s]} \frac{J(v)}{Y(v)^2} d N(v) 
   = [ \hat\sigma^2(t) - \hat\sigma^2(s) ] [ \hat\sigma^2(s) - \hat\sigma^2(r) ]
    \leq [ \hat\sigma^2(t) - \hat\sigma^2(r)]^2
 \end{align*}
 with $\hat \sigma^2$ defined as in \eqref{eq:varaalen}.
 By Theorem~IV.1.2 in \cite{abgk93} the convergence in probability of the right-hand side to $ (\sigma^2(t) - \sigma^2(r))^2 $ holds uniformly in $r,t \in [0,\tau]$.
 Following the lines of \cite{dobler14} by utilizing the proposition in \cite{jacod03}, p. 356, conditional tightness is shown along subsubsequences almost surely.
 Another application of the subsequence principle shows the stated result.
\end{proof}

\section*{Acknowledgements}
Jan Beyersmann was supported by Grant BE 4500/1-1 of the German Research Foundation (DFG).

\bibliographystyle{plainnat}
\bibliography{literatur}

\end{document}